\newtheorem{theorem}{Theorem}[section]
\newtheorem{corollary}{Corollary}[section]
\newtheorem{lemma}[theorem]{Lemma}
\newtheorem{proposition}{Proposition}[section]
\theoremstyle{definition}
\newtheorem{definition}[theorem]{Definition}
\newtheorem{example}{Example}[section]
\newcommand{\A}{{\mathcal A}}
\newcommand{\B}{{\mathcal B}}
\newcommand{\C}{{\mathcal C}}
\newcommand{\E}{{\mathcal E}}
\newcommand{\G}{{\mathcal G}}
\newcommand{\M}{{\mathcal M}}
\renewcommand{\S}{{\mathcal S}}
\title[Optimal Splitting Authentication Codes]
      {Infinite Families of Optimal Splitting Authentication Codes
      Secure Against Spoofing Attacks of Higher Order}
\author[Y. M. Chee, X. Zhang and H. Zhang]{}
\subjclass{Primary: 05B30, 94A60, 94C30; Secondary: 11T22.}
 \keywords{Authentication code, splitting authentication code, splitting $t$-design}
 \email{ymchee@ntu.edu.sg}
 \email{xiandezhang@ntu.edu.sg}
 \email{z\_h1984@126.com}
\thanks{Y. M. Chee and X. Zhang are supported in part by the National Research
Foundation of Singapore under Research Grant NRF-CRP2-2007-03. Y. M. Chee is also
supported in part by the Nanyang Technological University under Research
Grant M58110040.}
\begin{document}
\maketitle

\centerline{\scshape Yeow Meng Chee and Xiande Zhang }
\medskip
{\footnotesize
 \centerline{Division of Mathematical Sciences}
   \centerline{School of Physical \& Mathematical Sciences}
   \centerline{Nanyang Technological University}
   \centerline{21 Nanyang Link, Singapore 637371}
} 

\medskip

\centerline{\scshape Hui Zhang}
\medskip
{\footnotesize
 \centerline{Department of Mathematics}
   \centerline{Zhejiang University}
   \centerline{Hangzhou 310027, Zhejiang, China}
}

\bigskip

 \centerline{(Communicated by Iwan Duursma)}

\begin{abstract}
We consider the problem of constructing optimal authentication codes with splitting.
New infinite families of such codes
are obtained. In particular, we establish the first known infinite family of optimal authentication
codes with splitting that are secure against spoofing attacks of order two.
\end{abstract}

\section{Introduction}

In the standard model of authentication theory
\cite{Simmons:1982,Simmons:1984,Simmons:1985,Simmons:1992},
a {\em transmitter} wants to send some information to a {\em receiver} across
an insecure channel
while an {\em opponent} with access to the channel wants to deceive the receiver. The opponent
can either insert new messages into the channel, or intercept messages from the transmitter
and modify them into his own. In each case, the opponent's goal is to deceive the receiver into
believing that the new messages are authentic (coming from the transmitter). The first attack
based on insertion of new messages is known as {\em impersonation} and the second attack
based on modification of messages from the transmitter is known as {\em substitution}.

More formally, let $\S$ denote the set of all {\em source states},
$\M$ be the set of all {\em messages},
and $\E$ be the set of all {\em encoding rules}. All these are finite sets.
A source state is the information the transmitter wishes to communicate to the receiver.
An encoding rule is an injection from $\S$ to $2^\M$.
The transmitter and receiver agree beforehand on a secret encoding rule $e\in \E$.
To communicate a source state $s\in\S$, the transmitter determines $M=e(s)$
(note that $M\subseteq \M$) and
chooses a message $m\in M$ to send to the receiver.
The receiver accepts the received message as authentic if there exists an $M$ in
the image of $e$ containing the received message.
For the receiver to recover the source
state, each encoding rule must satisfy the condition
\begin{equation*}
e(s) \cap e(s') = \varnothing, \ \ \text{for distinct $s,s'\in\S$.}
\end{equation*}
The triple $(\S,\M,\E)$ is called an {\em authentication code}, or {\em A-code} in short.

An A-code $(\S,\M,\E)$ can be represented by an $|\E|\times |\S|$ matrix,
whose rows are indexed by authentication rules, and columns indexed by source states, such
that the entry in row $e\in\E$ and column $s\in\S$ is $e(s)$.

For $k$ an integer
and $X$ a finite set, we denote by $\binom{X}{k}$ the set of all $k$-subsets of $X$.
Research
on authentication codes have focused on the case when every encoding rule
is an injection from $\S$ to $\binom{\M}{c}$, for some positive $c$. Such an A-code is called
{\em a $c$-splitting A-code}. A $1$-splitting A-code is also known as an
{\em A-code without splitting}, and a $c$-splitting A-code with $c\geq 2$
is known as an {\em A-code with splitting}. A-codes with splitting are useful for
the analysis of authentication with arbitration \cite{KurosawaObana:2001}, 
an extended model of authentication
introduced by Simmons \cite{Simmons:1987,Simmons:1990}
for the scenario when the transmitter and receiver may both
be deceptive.

In a {\em spoofing attack of order $i$} \cite{Massey:1986}, 
the opponent observes $i$ distinct messages
sent by the transmitter through the insecure channel under the same encoding rule.
The opponent then inserts a new message (distinct from the $i$ messages already sent),
hoping to have it accepted by the receiver as authentic. Within this framework, impersonation
and substitution attacks are just spoofing attacks of order zero and one, respectively.
While these attacks have been rather well studied for A-codes, less is known
for the case of spoofing attacks of order $i\geq 2$,
especially on $c$-splitting A-codes when $c\geq 2$.

The probability distribution on the set of source states $\S$ induces a probability distribution on
$\binom{\S}{i}$, $i\geq 0$. Given these probability distributions, the transmitter and receiver
choose a probability distribution on $\E$, called an {\em encoding strategy}. 
For any $s\in\S$ and $e\in \E$, the transmitter
also chooses a probability distribution on $e(s)$, called a {\em splitting strategy}.
The opponent
is assumed to know the encoding and splitting strategies.
The transmitter and receiver chooses the encoding and splitting strategies to minimize
the probability of being deceived by the opponent. We denote by $P_{d_i}$
the probability that the opponent can deceive the receiver with a spoofing attack
of order $i$. The following lower bound on $P_{d_i}$ is known.

\begin{proposition}[Huber \cite{Huber:2010}]
\label{HuberLB}
In a $c$-splitting A-code $(\S,\M,\E)$, 
\begin{equation*}
P_{d_i} \geq c \cdot  \frac{|\S|-i}{|\M|-i},
\end{equation*}
for every $i\geq 0$.
\end{proposition}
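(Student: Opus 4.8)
The plan is to prove the bound by an averaging (counting) argument, exhibiting a single opponent response whose success probability already meets the bound, regardless of the encoding and splitting strategies chosen by the transmitter and receiver. Since $P_{d_i}$ is the deception probability under optimally (i.e.\ minimally) chosen strategies and the opponent is assumed to know these strategies, it suffices to show that against \emph{any} fixed encoding and splitting strategy the opponent has a response succeeding with probability at least $c(|\S|-i)/(|\M|-i)$; taking the minimum over the transmitter/receiver's strategies then preserves the bound.

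First I would fix the opponent's view. By assumption the source states underlying an order-$i$ attack form an element of $\binom{\S}{i}$, so the transmitter sends messages encoding $i$ \emph{distinct} source states $s_1,\dots,s_i$ under some encoding rule $e$, producing an observation $\mathbf{m}=\{m_1,\dots,m_i\}$ with $m_j\in e(s_j)$. Let $R=\M\setminus\mathbf{m}$ be the set of $|\M|-i$ messages the opponent may still insert. For each $m\in R$ define the conditional success probability
\begin{equation*}
q(m\mid\mathbf{m})=\Pr\bigl[m\text{ is accepted and decodes to a source state not in }\{s_1,\dots,s_i\}\;\big|\;\mathbf{m}\bigr].
\end{equation*}
The opponent's best response inserts, for each observation, a message maximizing $q(\cdot\mid\mathbf{m})$, so its success probability against the fixed strategy equals $\sum_{\mathbf{m}}p(\mathbf{m})\,\max_{m\in R}q(m\mid\mathbf{m})$, and hence $P_{d_i}$ is bounded below by this quantity for every fixed strategy.

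The heart of the argument is the elementary inequality $\max_{m\in R}q(m\mid\mathbf{m})\geq \frac{1}{|R|}\sum_{m\in R}q(m\mid\mathbf{m})$ combined with an \emph{exact} evaluation of the sum. For any encoding rule $e$ consistent with $\mathbf{m}$, the messages that would deceive the receiver are precisely those in $\bigcup_{s\notin\{s_1,\dots,s_i\}}e(s)$. Because $e$ is an injection into $\binom{\M}{c}$ whose images are pairwise disjoint, this union has exactly $c(|\S|-i)$ elements, and by the disjointness condition $e(s)\cap e(s')=\varnothing$ all of them avoid the images of $s_1,\dots,s_i$, hence avoid $\mathbf{m}$ and lie in $R$. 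Crucially, the count $c(|\S|-i)$ is the same for \emph{every} $e$ compatible with $\mathbf{m}$, so the conditional expectation of the number of deceiving messages in $R$ equals this constant; that is, $\sum_{m\in R}q(m\mid\mathbf{m})=c(|\S|-i)$.

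Combining these gives $\max_{m\in R}q(m\mid\mathbf{m})\geq c(|\S|-i)/(|\M|-i)$ for every observation $\mathbf{m}$, and averaging over $\mathbf{m}$ with $\sum_{\mathbf{m}}p(\mathbf{m})=1$ yields $P_{d_i}\geq c\cdot(|\S|-i)/(|\M|-i)$. The step I expect to demand the most care is the exact count: one must verify that the $c$-splitting structure forces the number of deceiving messages to be the constant $c(|\S|-i)$ for every $e$ consistent with the observation. This constancy, which rests entirely on the disjointness of the images $e(s)$, is exactly what allows the quantity to be pulled out of the conditional expectation and thus drives the whole bound.
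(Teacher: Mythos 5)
Proposition \ref{HuberLB} is quoted from Huber \cite{Huber:2010} and the paper itself gives no proof of it, so there is no in-paper argument to compare against; judged on its own, your proof is correct and is essentially the standard averaging argument used by Huber. Your key step --- that for \emph{every} encoding rule consistent with the observed messages the number of deceiving messages is the constant $c(|\S|-i)$, all of them lying outside the observation because the sets $e(s)$ are pairwise disjoint --- is exactly what makes the conditional expectation exactly computable and yields the bound after averaging. One cosmetic remark: the paper's deception event only requires the inserted message to be accepted, not to decode to a new source state; your stricter event only strengthens the lower bound, since acceptance alone would give the count $c|\S|-i\geq c(|\S|-i)$.
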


A $c$-splitting A-code is said to
be {\em $(t-1)$-fold secure against spoofing} if 
$P_{d_i}=c(|\S|-i)/(|\M|-i)$, for all $i$, $0\leq i<t$.
For succinctness, we call such a code a {\em $(t,c)$-splitting
A-code}.

Huber \cite{Huber:2010} also showed that the number of encoding rules must be large enough
if an A-code is to be $(t-1)$-fold secure against spoofing.

\begin{proposition}[Huber \cite{Huber:2010}]
\label{HuberE}
In a $(t,c)$-splitting A-code $(\S,\M,\E)$, 
\begin{equation*}
|\E| \geq \frac{1}{c^{t}}\cdot \frac{\binom{|\M|}{t}}{\binom{|\S|}{t}}.
\end{equation*}
\end{proposition}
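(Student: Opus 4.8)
The plan is to prove the bound by a double count of incidences between encoding rules and $t$-subsets of messages, after first extracting from the security hypothesis the purely combinatorial fact that \emph{every} $t$-subset of $\M$ must be ``authenticatable''. Call a $t$-subset $T\in\binom{\M}{t}$ \emph{authenticated} by an encoding rule $e\in\E$ if the $t$ messages of $T$ lie in $t$ distinct blocks of $e$, i.e.\ there are distinct source states $s_1,\dots,s_t$ with exactly one element of $T$ in each of $e(s_1),\dots,e(s_t)$. Since the images $e(s)$ are pairwise disjoint, each message lies in at most one block of $e$, so this assignment of messages to blocks is unique when it exists. Let $I$ be the number of incidences $(e,T)$ with $e$ authenticating $T$.

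First I would compute $I$ by fixing the encoding rule. A rule $e$ has $|\S|$ pairwise disjoint blocks, each of size $c$, and a $t$-subset is authenticated by $e$ precisely when it arises by choosing $t$ of these blocks and then one message from each. Hence $e$ authenticates exactly $\binom{|\S|}{t}c^{t}$ distinct $t$-subsets, and summing over $\E$ gives $I=|\E|\binom{|\S|}{t}c^{t}$. On the other hand, writing $r_T$ for the number of encoding rules authenticating $T$, we have $I=\sum_{T}r_T$ over all $t$-subsets $T\subseteq\M$. If every $T\in\binom{\M}{t}$ satisfies $r_T\ge 1$, then $I\ge\binom{|\M|}{t}$, and comparing the two expressions yields $|\E|\binom{|\S|}{t}c^{t}\ge\binom{|\M|}{t}$, which is exactly the claimed inequality.

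Everything therefore reduces to showing that $(t-1)$-fold security forces every $t$-subset of $\M$ to be authenticated by at least one encoding rule, and this is the step I expect to be the main obstacle. Here I would revisit the derivation of Proposition~\ref{HuberLB}: the bound $P_{d_i}\ge c(|\S|-i)/(|\M|-i)$ comes from replacing, for each observed set of $i$ messages, the opponent's best payoff by the \emph{average} payoff over all $|\M|-i$ candidate insertions, using that within any rule authenticating the observed $i$-set there are exactly $c(|\S|-i)$ valid one-message extensions (the $|\S|-i$ unused blocks, each offering $c$ messages). Equality at level $i$ then requires that, for every $i$-set occurring with positive probability, the payoff be the same for \emph{all} candidate messages; since the code is usable this common value is positive, so each message not already observed must extend the $i$-set to an authenticatable $(i+1)$-set.

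Applying this successively for $i=0,1,\dots,t-1$ shows by induction that every $t$-subset of $\M$ is authenticatable, which closes the argument via the count above. The delicate points to get right are the bookkeeping of the splitting in the extension count $c(|\S|-i)$, and the reduction to the support of the optimal strategy: I would discard at the outset any message that no positive-probability encoding rule ever authenticates, and restrict to the encoding rules of positive weight, noting that this only decreases $|\E|$ and makes the equality analysis for $P_{d_i}$ valid on a genuine subcode. Once these reductions are in place, the equality conditions deliver $r_T\ge 1$ for every $T\in\binom{\M}{t}$, and the double count yields the stated lower bound on $|\E|$.
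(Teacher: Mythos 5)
The paper itself offers no proof of this proposition --- it is quoted from Huber \cite{Huber:2010} --- so your attempt can only be measured against the standard argument, which it does mirror: the known proof likewise reduces the bound to the claim that $(t-1)$-fold security forces every $t$-subset of $\M$ to be authenticated by at least one encoding rule, and then double counts incidences. Your counting half is correct and complete: since the blocks $e(s)$ of a rule are pairwise disjoint, each rule authenticates exactly $\binom{|\S|}{t}c^{t}$ distinct $t$-subsets, and $r_T\geq 1$ for every $T\in\binom{\M}{t}$ gives $|\E|\binom{|\S|}{t}c^{t}\geq\binom{|\M|}{t}$, which is the stated inequality.

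The gap is in the induction that is supposed to deliver $r_T\geq 1$. The equality condition for $P_{d_i}$ constrains only those observation sets $M^i$ that occur with \emph{positive probability} under the transmitter's actual encoding and splitting strategies, whereas what your extension step produces is strictly weaker: it shows that $M^i\cup\{m\}$ is \emph{authenticatable} by some positive-probability rule $e$, not that it is \emph{observable}. To feed $M^i\cup\{m\}$ back into the equality condition at level $i+1$ you also need (a) the splitting strategy on the block of $e$ containing $m$ to assign $m$ positive probability, and (b) the induced distribution on $\binom{\S}{i+1}$ to give positive weight to the corresponding source states; neither is guaranteed, since the chosen strategies may be degenerate. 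Your proposed reductions do not repair this: after the level-$0$ equality there are no ``never-authenticated'' messages left to discard, and restricting to positive-weight encoding rules says nothing about the supports of the splitting strategies or of the source distribution. So as written the induction already stalls in passing from $i=0$ to $i=1$: a message can lie in blocks of positive-probability rules and yet never be sent. The standard way to close this --- and the setting in which Huber's statement lives; note that the equivalence with splitting $t$-designs quoted in Section 2 of the paper speaks of \emph{equiprobable} source states --- is to assume that every $i$-subset of $\S$ with $i<t$ has positive probability and that splitting strategies have full support on each block, under which ``authenticatable by a positive-probability rule'' and ``observable'' coincide and your induction goes through verbatim. You should either state these hypotheses explicitly or give an argument that handles degenerate strategies; without one of the two, the key lemma is unproved.
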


For efficiency, we want the number of encoding rules in an A-code
to be as small as possible. We call a $(t,c)$-splitting A-code 
{\em optimal} if the lower bound in Proposition \ref{HuberE}
is met with equality.

The main contribution of this paper is
on the construction of optimal $(t,c)$-splitting A-codes with three
source states, 
for $c\geq 2$ and $t\in\{2,3\}$. In particular, we show that the following two new families
of A-codes exist:
\begin{enumerate}[(i)]
\item $(2,5)$-splitting A-codes with three source states and $v$ messages, for all
$v\equiv 1\bmod{150}$, $v\not=301$.
\item $(3,2)$-splitting A-codes with three source states and $v$ messages, for all
$v\equiv 2\bmod{8}$.
\end{enumerate}
The $(3,2)$-splitting A-codes we obtained is the first known
infinite family of $(t,c)$-splitting A-codes
with $t>2$ and $c>1$.
We also prove that a $(2,c)$-splitting A-code with $k$ source states and $v$
messages exists for all sufficiently large $v$ (with $k$ and $c$ fixed).

\section{Preliminaries}

This section serves to provide notions and
results that are required for our construction in subsequent
sections.

The ring $\mathbb{Z}/n\mathbb{Z}$ is denoted $\mathbb{Z}_n$.


\subsection{Design-Theoretic Background}

Huber \cite{Huber:2010} defined {\em splitting $t$-designs},
generalizing the splitting 2-designs of
Ogata {\em et al.} \cite{Ogataetal:2004}.

\begin{definition}
Let $t$, $v$, $k$, $c$, and $\lambda$ be positive integers, with $t\leq k$ and $ck\leq v$.
A {\em splitting $t$-design}, or more precisely,
a {\em splitting $t$-$(v,k\times c,\lambda)$ design}, is a pair $(X,\A)$ such that
\begin{enumerate}[(i)]
\item $X$ is a set of $v$ elements, called {\em points};
\item $\A$ is a set of $k\times c$ arrays, called {\em blocks}, with entries from $X$,
such that each point of $X$ occurs at most once in each block;
\item for every $\{x_i : 1\leq i\leq t\}\in\binom{X}{t}$, there are exactly $\lambda$ blocks
in which $x_i$, $1\leq i\leq t$, occur in $t$ different rows.
\end{enumerate}
\end{definition}

Note that a splitting $t$-$(v,k\times 1,\lambda)$ design coincides with the classical
notion of a $t$-$(v,k,\lambda)$ design.
Huber \cite{Huber:2010} proved the equivalence between splitting $t$-designs and
optimal splitting A-codes.

\begin{theorem}[Huber \cite{Huber:2010}]
There exists a splitting $t$-$(v,k\times c,1)$ design if and only if
there exists an optimal $(t,c)$-splitting A-code for $k$ equiprobable source states,
having $v$ messages and $\binom{v}{t}/c^t\binom{k}{t}$ encoding rules.
\end{theorem}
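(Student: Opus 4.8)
The plan is to build an explicit, reversible dictionary between the two objects and then to match the two defining conditions on each side: the splitting $t$-design property together with $\lambda=1$ on the design side, against $(t-1)$-fold security together with optimality on the code side.

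First I would fix the correspondence. Identify the point set $X$ with the message set $\M$ and the $k$ rows of a block with the $k$ (equiprobable) source states, and read a $k\times c$ block $B$ as the encoding rule $e_B$ for which $e_B(s)$ is the $c$-set of entries in row $s$ of $B$. Since each point occurs at most once in a block, the entries of a single row are distinct, so $e_B(s)\in\binom{\M}{c}$, and entries of different rows are disjoint, so $e_B(s)\cap e_B(s')=\varnothing$ for $s\neq s'$; these are exactly the requirements on an encoding rule of a $c$-splitting A-code. The assignment $B\mapsto e_B$ is clearly invertible, giving a bijection between the blocks $\A$ and the encoding rules $\E$. Under this dictionary, the event ``$x_1,\dots,x_i$ occur in $i$ distinct rows of $B$'' becomes ``$e_B$ is consistent with the $i$ messages $x_1,\dots,x_i$ arising from $i$ distinct sources,'' which is precisely the event governing a spoofing attack of order $i-1$.

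For the direction from designs to codes, the engine is a tactical double count. For $0\le i\le t$ and any $i$ points lying in $i$ distinct rows, counting incidences with the realized $t$-sets shows that the number of blocks realizing these $i$ points is the constant
\[
\lambda_i=\frac{\binom{v-i}{t-i}}{c^{\,t-i}\binom{k-i}{t-i}},
\]
independent of the chosen points. At $i=0$ this gives $|\A|=\binom{v}{t}/\big(c^{t}\binom{k}{t}\big)$, matching the optimal number of encoding rules in Proposition \ref{HuberE}. Running uniform encoding and splitting strategies, the posterior on blocks after observing $i$ messages from distinct sources is uniform over the $\lambda_i$ consistent blocks, and an opponent inserting a message destined for a fresh source succeeds with probability $\lambda_{i+1}/\lambda_i$; the binomial coefficients telescope to $c(k-i)/(v-i)$. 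After checking that no other insertion does better, we obtain $P_{d_i}=c(k-i)/(v-i)$ for $0\le i<t$, i.e.\ Huber's bound (Proposition \ref{HuberLB}) with equality, so the code is an optimal $(t,c)$-splitting A-code.

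The converse is the substantive direction and the main obstacle. Transporting an optimal $(t,c)$-splitting A-code through the same dictionary yields a set $\A$ of $k\times c$ arrays, and I must show that every $t$-subset of $\M$ is realized in $t$ distinct rows by exactly one array. The lever is the equality case of Proposition \ref{HuberLB}: the quantity $P_{d_i}$ is a weighted average over observed $i$-sets of the opponent's best conditional success probability, and Huber's inequality arises by bounding each term from below and applying convexity, so the hypothesis $P_{d_i}=c(k-i)/(v-i)$ forces every one of those inequalities to be an equality. Unwinding these equality conditions by induction on $i=0,1,\dots,t-1$ should show that the number of arrays realizing any $i$ messages from $i$ distinct sources is a constant $\lambda_i$, and in particular that $\lambda_t$ is a constant $\lambda$; this is exactly the splitting $t$-design property. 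The double count $|\A|\cdot c^{t}\binom{k}{t}=\lambda\binom{v}{t}$ together with optimality, $|\A|=\binom{v}{t}/\big(c^{t}\binom{k}{t}\big)$, then forces $\lambda=1$. I expect the delicate step to be this equality analysis in the presence of splitting: one must carry both the encoding and the splitting strategy through the convexity argument and rule out that a cleverly chosen inserted message sharing a source with an observed one beats the uniform bound, which is exactly where the internal structure of the $c$-sets $e(s)$ must be used.
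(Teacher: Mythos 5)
The paper itself contains no proof of this statement: it is quoted from Huber \cite{Huber:2010} and used as a black box, so your attempt can only be measured against a correct reconstruction of Huber's argument, not against anything in the paper. Your dictionary between blocks and encoding rules is the right one, and your forward direction is essentially correct and standard: the double count showing that any $i$ points lying in $i$ distinct rows of some block do so in exactly $\lambda_i=\binom{v-i}{t-i}/\bigl(c^{t-i}\binom{k-i}{t-i}\bigr)$ blocks, uniform encoding and splitting strategies, the uniform posterior on consistent blocks, and the telescoping $\lambda_{i+1}/\lambda_i=c(k-i)/(v-i)$ give $P_{d_i}$ meeting the bound of Proposition \ref{HuberLB} for all $0\le i<t$, while $\lambda_0$ matches the bound of Proposition \ref{HuberE}.

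The converse, as you yourself suspect, is where the proposal breaks, and the specific step that fails is: ``the hypothesis $P_{d_i}=c(k-i)/(v-i)$ forces \ldots\ the number of arrays realizing any $i$ messages from $i$ distinct sources is a constant $\lambda_i$.'' The deception probabilities are functionals of the encoding and splitting strategies only; they are blind to encoding rules carrying probability zero and to how many times a $t$-set is covered, so equality in Huber's bounds cannot, by itself, control cardinalities. Concretely: take a splitting $t$-design with $\lambda=1$, form the associated code with uniform strategies, and adjoin one duplicate block as an extra encoding rule of encoding probability zero. Every equality $P_{d_i}=c(k-i)/(v-i)$ still holds, yet some $t$-sets are now realized by one array and others by two. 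Hence your skeleton --- equalities $\Rightarrow$ constant counts $\Rightarrow$ (divide, using optimality) $\lambda=1$ --- has an unbridgeable middle implication: optimality cannot be deferred to the last line. The correct shape of the argument is to run the equality analysis with probability mass rather than cardinality: carried out with the actual, possibly non-uniform, strategies (this is where the splitting strategies and the observable-versus-valid distinction you flag must be handled), it shows that every $t$-subset of $\M$ is realized by a set of encoding rules of the same \emph{positive} total probability, hence by at least one rule. Since each rule realizes exactly $c^t\binom{k}{t}$ of the $\binom{v}{t}$ $t$-subsets, at-least-once coverage already forces $|\E|\ge\binom{v}{t}/\bigl(c^t\binom{k}{t}\bigr)$, and the assumed equality of $|\E|$ with this bound then pins every $t$-subset to exactly one rule, which is the splitting $t$-$(v,k\times c,1)$ property. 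The missing idea, in short, is that optimality must be what converts positive coverage mass into exact single coverage, not what normalizes an already-constant $\lambda$.
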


The {\em necessary divisibility conditions} for the existence of splitting $t$-designs are as
follows.

\begin{proposition}[Huber \cite{Huber:2010}]
\label{necessary}
The necessary conditions for the
existence of a splitting $t$-$(v,k\times c,\lambda)$ design
are
\begin{equation*}
\lambda\binom{v-s}{t-s}\equiv 0\bmod{c^{t-s}\binom{k-s}{t-s}}, \ \ \text{for all $s$, $0\leq s\leq t$.}
\end{equation*}
\end{proposition}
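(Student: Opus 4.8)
The plan is to derive the necessary divisibility conditions by a standard counting argument, analogous to the classical derivation of the necessary conditions for ordinary $t$-designs via the chain of derived designs. The key observation is that in a splitting $t$-$(v,k\times c,\lambda)$ design, fixing a set of $s$ points that are required to appear in $s$ distinct rows of a block produces a smaller combinatorial structure whose block count can be computed in two ways, and comparing these counts yields a divisibility relation.

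First I would fix an $s$-subset $S=\{x_1,\dots,x_s\}\in\binom{X}{s}$ together with an assignment of these $s$ points to $s$ distinct rows. For each such configuration, I would count the number of ordered $t$-tuples $(\{x_i\}_{i=1}^t)$ extending $S$ that occur in $t$ different rows of a common block, counting incidences between $(t-s)$-subsets of the remaining $v-s$ points and the blocks containing $S$ appropriately. The defining property (iii) guarantees that every $t$-subset lies in exactly $\lambda$ blocks (in $t$ distinct rows), so summing $\lambda$ over all ways to complete $S$ to a $t$-set gives the total incidence count $\lambda\binom{v-s}{t-s}$ on one side.

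On the other side I would count the same quantity by first choosing a block $B$ that contains $S$ in $s$ distinct rows, then choosing the remaining $t-s$ points from $B$ lying in $t-s$ of the remaining rows. Because each block is a $k\times c$ array in which each of the $k$ rows carries $c$ entries, the number of ways to pick $t-s$ further points in distinct rows (avoiding the $s$ rows already used) is $c^{t-s}\binom{k-s}{t-s}$. Hence if $\lambda_s$ denotes the number of blocks containing a fixed $s$-configuration, the total count equals $\lambda_s \cdot c^{t-s}\binom{k-s}{t-s}$. Equating the two counts gives
\begin{equation*}
\lambda_s \cdot c^{t-s}\binom{k-s}{t-s} = \lambda\binom{v-s}{t-s},
\end{equation*}
and since $\lambda_s$ must be a nonnegative integer, the right-hand side is divisible by $c^{t-s}\binom{k-s}{t-s}$, which is precisely the asserted condition.

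The main obstacle I anticipate is bookkeeping the row structure correctly: unlike the classical case, one must track not merely which points are chosen but into which of the $k$ rows they fall, and verify that the count $c^{t-s}\binom{k-s}{t-s}$ of row-respecting completions is independent of the particular $s$-configuration fixed. I would argue this independence from the symmetry in the definition—property (iii) treats all $t$-subsets uniformly and makes no reference to specific rows—so that $\lambda_s$ is well-defined and the divisibility follows for every $s$ with $0\leq s\leq t$. The edge cases $s=0$ (recovering the global count $\lambda\binom{v}{t}=b\,c^t\binom{k}{t}$) and $s=t$ (the trivial condition $\lambda\equiv 0\bmod 1$) serve as useful consistency checks.
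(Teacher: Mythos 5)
The paper itself gives no proof of this proposition---it is quoted directly from Huber \cite{Huber:2010}---so the comparison here is against the standard argument, which is what you have reproduced. Your double count of pairs $(T,B)$, with $T$ a $(t-s)$-subset of $X\setminus S$ and $B$ a block in which $S\cup T$ occurs in $t$ distinct rows, is the classical derivation, and the resulting identity
\begin{equation*}
\lambda_s\cdot c^{t-s}\binom{k-s}{t-s}=\lambda\binom{v-s}{t-s},
\end{equation*}
together with the integrality of $\lambda_s$, gives exactly the stated divisibility. One point in your setup should be repaired: you begin by fixing not just the $s$-subset $S$ but also an assignment of its points to $s$ \emph{specific} rows. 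Under that convention the first count breaks, because for a given completion $T$ the $\lambda$ blocks in which $S\cup T$ occurs in $t$ distinct rows need not place the points of $S$ in the prescribed rows; the total $\lambda\binom{v-s}{t-s}$ is recovered only after summing over all row assignments, equivalently by requiring merely that the points of $S$ lie in $s$ pairwise distinct rows, whichever rows those happen to be. Your second count already uses this weaker convention (``a block $B$ that contains $S$ in $s$ distinct rows''), so the argument as executed is sound once the opening sentence is amended to match, with $\lambda_s$ defined as the number of blocks in which $S$ occurs in $s$ distinct rows. Finally, you do not need to verify that $\lambda_s$ is independent of the choice of $S$, nor any symmetry of property (iii): the divisibility follows for each fixed $S$ separately from the fact that $\lambda_s$ is a nonnegative integer, and the independence is a consequence of the identity rather than an ingredient of the proof.
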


Sometimes, the points of a splitting $t$-design $(X,\A)$ can be identified with the elements of
an additive group $\Gamma$, so that $X=\Gamma$. If the set of blocks $\A$ can be generated by
a set $\B\subseteq\A$, that is,
\begin{equation*}
\A = \cup_{B\in\B}\{ B + g: g\in \Gamma \},
\end{equation*}
then $\B$ is called a {\em set of base blocks} of $(X,\A)$.

\begin{example}
\label{151}
Let $X=\mathbb{Z}_{151}$. The set containing the single array
\begin{equation*}
A = \left(
\begin{array}{ccccc}
0 & 1 & 2 & 3 & 4 \\
5 & 13 & 59 & 105 & 118 \\
28 & 67 & 73 & 112 & 134
\end{array}
\right)
\end{equation*}
as a base block, generates the set of blocks $\A$ for
a splitting $2$-$(151,3\times 5,1)$ design $(X,\A)$.
\end{example}

Our constructions for splitting $t$-designs also rely on group divisible designs (GDD).
Let $t$, $k$, and $v$ be nonnegative integers.
A {\em group divisible $t$-design of order $v$ and block size $k$},
denoted GDD$(t,k,v)$, is a triple $(X,\G,\A)$ satisfying the following
properties:
\begin{enumerate}[(i)]
\item $X$ is a set of $v$ elements, called {\em points};
\item $\G=\{G_1,\ldots,G_s\}$ is a partition of $X$ into subsets, called {\em groups};
\item $\A\subseteq \binom{X}{k}$, whose elements are called {\em blocks},
such that each $A\in\A$ intersects any group $G\in\G$ in at most one point;
\item every $T\in\binom{X}{t}$ containing at most one point from each group is contained
in exactly one block.
\end{enumerate}
The {\em type} of a GDD$(t,k,v)$ $(X,\G,\A)$ is the multiset $[|G| : G\in \G]$. We use the
exponential notation to describe the type of a GDD: a GDD of type $g_1^{n_1}\cdots g_s^{n_s}$
is a GDD where there are exactly $n_i$ groups of size $g_i$, $1\leq i\leq s$.

We require the following result.

\begin{theorem}[Hanani \cite{Hanani:1975}, Brouwer {\em et al.} \cite{Brouweretal:1977}, Mills \cite{Mills:1990}, Ji \cite{Ji:2009}]\hfill
\label{GDDresults}
\begin{enumerate}[(i)]
\item There exists a {\rm GDD}$(2,3,gn)$ of type $g^n$ if and only if $n\geq 3$,
$(n-1)g\equiv 0\bmod{2}$, and $n(n-1)g^2\equiv 0\bmod{6}$.
\item There exists a {\rm GDD}$(2,4,gn)$ of type $g^n$ if and only if $n\geq 4$, 
$(n-1)g\equiv 0\bmod{3}$, and $n(n-1)g^2\equiv 0\bmod{12}$, with the exception of
$(g,n)\in\{(2,4)$, $(6,4)\}$.
\item For $n>3$, $n\not=5$, a {\rm GDD}$(3,4,gn)$ of type $g^n$ exists if and only if
$gn\equiv 0\bmod{2}$ and $(n-1)(n-2)g\equiv 0\bmod{3}$. A
{\rm GDD}$(3,4,5g)$ of type $g^5$ exists when $g\equiv 0\bmod{2}$, $g\not=2$, and
$g\not\equiv 10, 26\bmod{48}$.
\end{enumerate}
\end{theorem}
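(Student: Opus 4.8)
The plan is to treat each part by first pinning down the stated arithmetic conditions as \emph{necessary}, which is routine, and then to attack \emph{sufficiency} through the recursive machinery of design theory, with the genuine difficulty concentrated in part (iii).

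First I would derive the necessary conditions by double counting. For a GDD$(t,k,gn)$ of type $g^n$ with index one, fix an $s$-subset $S$ of points meeting each group at most once and count the blocks through $S$: the number of blocks containing $S$ equals the number of admissible $(t-s)$-subsets avoiding the groups already met by $S$, divided by $\binom{k-s}{t-s}$, and requiring each such count to be a nonnegative integer yields exactly the congruences in (i)--(iii). For instance, in (i) the replication condition $(n-1)g\equiv 0\pmod{2}$ records that every point lies on an integer number of triples, while $n(n-1)g^2\equiv 0\pmod{6}$ records that the global block count $\binom{n}{2}g^2/3$ is an integer; the conditions in (ii) and the pair/point conditions in (iii) arise in the same way.

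For the sufficiency of the two pairwise cases (i) and (ii) (those with $t=2$), I would use the standard framework of Wilson's Fundamental Construction together with inflation by transversal designs. The idea is to start from a master GDD whose blocks are replaced by smaller GDDs or by transversal designs TD$(k,\cdot)$ (equivalently, mutually orthogonal Latin squares) to build larger members of the family, filling in the groups with smaller GDDs supplied inductively. This reduces the problem to a finite list of small orders and a finite list of ingredient designs, both settled by direct or computer constructions. The documented exceptions $(g,n)\in\{(2,4),(6,4)\}$ in (ii) are precisely the admissible parameters for which such an ingredient provably fails to exist: the type $6^4$ case is equivalent to a pair of orthogonal Latin squares of order six, which do not exist, so these parameters must be excluded rather than constructed.

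The hard part will be part (iii), the existence of $3$-GDDs of block size four, which are the group-divisible analogues of Steiner quadruple systems and for which the simple product constructions available in the $t=2$ case do not preserve the $3$-design property. Here I would follow the approach of Mills and Ji: pass through \emph{candelabra systems} and \emph{$H$-designs}, use tripling- and quadrupling-type recursions that are engineered to keep every triple covered exactly once, and glue these together from a carefully chosen stock of base designs. The delicate points are controlling divisibility at each recursive step so that the derived designs remain admissible, and disposing of the sporadic small orders near the boundary $n=5$ --- which is exactly why part (iii) carries the extra hypotheses $n\neq 5$ and, in the $g^5$ subcase, the separate congruence restrictions $g\not\equiv 10,26\pmod{48}$.
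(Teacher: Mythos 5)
There is nothing in the paper to compare your argument against: Theorem~\ref{GDDresults} is not proved in the paper at all. It is a compendium of known results quoted from Hanani \cite{Hanani:1975}, Brouwer {\em et al.} \cite{Brouweretal:1977}, Mills \cite{Mills:1990}, and Ji \cite{Ji:2009}, and it enters the paper only as a black box in the constructions of Sections 4 and 5. So the only meaningful question is whether your sketch would reconstitute the cited literature, and at the level of strategy it does: the necessity argument by double counting (replication $(n-1)g/2$ and block count $n(n-1)g^2/6$ in (i), with analogues in (ii) and (iii)) is standard and correct, and your identification of the exceptions in (ii) is exactly right --- since block size equals the number of groups, a GDD$(2,4,4g)$ of type $g^4$ is precisely a transversal design TD$(4,g)$, i.e., a pair of orthogonal Latin squares of order $g$, which fail to exist for $g\in\{2,6\}$. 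Likewise, Mills' and Ji's treatment of (iii) does go through $H$-designs and candelabra-type recursions engineered to preserve exact triple coverage, as you say.

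That said, what you have written is a roadmap, not a proof. Every point of genuine mathematical content is deferred: the explicit finite stock of base and ingredient designs, the closure argument showing the recursions reach \emph{every} admissible pair $(g,n)$ (a nontrivial part of Hanani's and of Brouwer {\em et al.}'s papers), and in (iii) the specific candelabra systems and case analysis that occupy the whole of Mills' and Ji's work. One nuance you also have backwards: in the $g^5$ subcase the hypotheses $g\neq 2$ and $g\not\equiv 10,26\bmod{48}$ are not sporadic cases that your recursion would ``dispose of.'' That clause of the theorem is only a sufficiency statement (``exists when''), and the excluded congruence classes are precisely the cases the cited works could not settle --- they are omitted because they remained unresolved, not because they are constructed separately or proven nonexistent. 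A proof attempt that treats them as handled exceptions, parallel to $(2,4)$ and $(6,4)$ in (ii) where nonexistence is actually established, is claiming more than the theorem (or the literature) asserts.
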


Analogous to splitting $t$-designs, a ``splitting'' version of a GDD can be defined.
This has been done by Wang \cite{Wang:2006} for $t=2$. Here, we extend it to general $t$.
A {\em splitting group divisible $t$-design}, denoted splitting GDD$(t, k\times c, v)$,
is a triple $(X,\G,\A)$ satisfying the following properties:
\begin{enumerate}[(i)]
\item $X$ is a set of $v$ elements, called {\em points};
\item $G=\{G_1,\ldots,G_s\}$ is a partition of $X$ into subsets, called {\em groups};
\item $\A$ is a set of $k\times c$ arrays, called {\em blocks}, with entries from $X$, such that
each point of $X$ occurs at most once in each block;
\item for every $\{x_i: 1\leq i\leq t\}\in\binom{X}{t}$ containing at most one point from each group,
there is exactly one block in which $x_i$, $1\leq i\leq t$, occur in $t$ different rows.
\end{enumerate}
The type of a splitting GDD is defined in a fashion similar to that for a GDD.

Splitting GDDs play an important role in the recursive constructions of splitting designs.
The following is a straightforward extension of Wilson's Fundamental Construction for GDDs
\cite{Wilson:1972a,Wilson:1972b}
to splitting GDDs.

\begin{theorem}[Fundamental Construction]
\label{FC}
Let $(X,\G,\A)$ be a {\rm GDD}$(t,k,v)$. Suppose that for each block $A\in\A$, there exists
a splitting {\rm GDD}$(t,k'\times c,kc)$ of type $c^k$, $(X_A,\G_A,\B_A)$, where
\begin{align*}
X_A &= A \times \{1,\ldots,c\}, \\
\G_A &= \{\{x\}\times\{1,\ldots,c\}: x\in A\},
\end{align*}
then there exists a splitting {\rm GDD}$(t,k'\times c, vc)$ of type $[ c|G|: G\in\G]$
$(X',\G',\A')$, where
\begin{align*}
X' &= X\times \{1,\ldots,c\}, \\
\G' &= \{G\times\{1,\ldots,c\} :G\in\G\},\\
\A'&= \cup_{A\in\A} \B_A.
\end{align*}
\end{theorem}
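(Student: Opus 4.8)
The plan is to verify directly that the triple $(X',\G',\A')$ satisfies the four defining properties of a splitting GDD$(t,k'\times c,vc)$, with properties (i)--(iii) amounting to bookkeeping and property (iv) carrying the real content. Since $|X|=v$, we have $|X'|=|X\times\{1,\ldots,c\}|=vc$, giving (i). Because $\G$ partitions $X$, the sets $G\times\{1,\ldots,c\}$ partition $X\times\{1,\ldots,c\}=X'$, giving (ii); moreover each such group has size $c|G|$, which accounts for the claimed type $[c|G|:G\in\G]$. For (iii), every block of $\A'$ lies in some $\B_A$, hence is a $k'\times c$ array with entries in $X_A=A\times\{1,\ldots,c\}\subseteq X'$ in which no point repeats, this last fact being property (iii) of the ingredient splitting GDD $(X_A,\G_A,\B_A)$.

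For property (iv), I would first take an arbitrary $T=\{(y_i,a_i):1\leq i\leq t\}\in\binom{X'}{t}$ that meets each group of $\G'$ at most once, and translate this transversality condition down to the base set $X$. Two points $(y,a),(y',a')$ lie in the same group of $\G'$ precisely when $y,y'$ lie in the same group of $\G$; hence the hypothesis forces $y_1,\ldots,y_t$ to lie in $t$ distinct groups of $\G$, and in particular to be $t$ distinct points of $X$. Applying property (iv) of the GDD$(t,k,v)$ to the transverse $t$-set $\{y_1,\ldots,y_t\}$ then yields a unique block $A_0\in\A$ containing all of $y_1,\ldots,y_t$.

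Next I would locate the desired block of $\A'$ inside $\B_{A_0}$. Since each $y_i\in A_0$, all points of $T$ lie in $X_{A_0}=A_0\times\{1,\ldots,c\}$, and since the $y_i$ are distinct the points of $T$ occupy $t$ distinct groups of $\G_{A_0}$. Property (iv) of the ingredient splitting GDD $(X_{A_0},\G_{A_0},\B_{A_0})$ therefore supplies exactly one block of $\B_{A_0}$ in which $(y_1,a_1),\ldots,(y_t,a_t)$ occur in $t$ different rows.

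The remaining, and only delicate, step is to rule out contributions from the other ingredient designs. A block of $\B_A$ has all of its entries in $A\times\{1,\ldots,c\}$, so it can contain $T$ in $t$ distinct rows only if $\{y_1,\ldots,y_t\}\subseteq A$; by the uniqueness of $A_0$ coming from the GDD, this forces $A=A_0$. Consequently the number of blocks of $\A'=\cup_{A\in\A}\B_A$ realizing $T$ in $t$ distinct rows equals the sum over $A\in\A$ of the corresponding counts, which is $0$ for every $A\neq A_0$ and $1$ for $A=A_0$. This establishes (iv) and completes the verification. The main obstacle is purely conceptual rather than computational: one must carefully match the transversality condition on $\G'$ with that on $\G$, and then invoke GDD uniqueness to confine all relevant blocks to a single $\B_{A_0}$.
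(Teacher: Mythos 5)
Your proof is correct. The paper gives no proof of this theorem at all---it is presented as a ``straightforward extension'' of Wilson's Fundamental Construction---and your direct verification of properties (i)--(iv), with the key step of using uniqueness of the block $A_0$ in the base GDD to confine all blocks realizing a transverse $t$-set to the single ingredient design $\B_{A_0}$, is precisely the argument the paper leaves implicit.
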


Since the trivial splitting GDD$(t,k\times c,kc)$ of type $c^k$ (containing only one block)
always exists for any $t$, $k$, and $c$, we have the following.

\begin{corollary}
\label{multiply}
If there exists a {\rm GDD}$(t,k,v)$ of type $g_1^{n_1}\ldots g_s^{n_s}$, 
then there exists a splitting
{\rm GDD}$(t,k\times c,vc)$ of type $(cg_1)^{n_1}\ldots (cg_s)^{n_s}$.
\end{corollary}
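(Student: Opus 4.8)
The plan is to invoke the Fundamental Construction (Theorem~\ref{FC}) with $k'=k$, using the so-called trivial splitting GDD as the ingredient design for every block of the given GDD$(t,k,v)$. All that remains is to confirm that this trivial ingredient genuinely exists and then to read off the type of the output.

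First I would make the trivial splitting GDD$(t,k\times c,kc)$ of type $c^k$ explicit. On the point set $X_A=A\times\{1,\ldots,c\}$ with groups $\G_A=\{\{x\}\times\{1,\ldots,c\}:x\in A\}$, write $A=\{x_1,\ldots,x_k\}$ and take the single $k\times c$ array whose $i$-th row lists the $c$ points $(x_i,1),\ldots,(x_i,c)$. Each point of $X_A$ then appears exactly once, so property~(iii) of a splitting GDD holds. The one step requiring care is the covering condition~(iv): any $t$-subset of $X_A$ meeting each group in at most one point must be drawn from $t$ distinct groups, and since the group $\{x_i\}\times\{1,\ldots,c\}$ occupies exactly row $i$, those $t$ points automatically lie in $t$ distinct rows. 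Hence the unique block realizes every such $t$-subset in $t$ different rows, and~(iv) holds for all $t\leq k$. This establishes the existence of the trivial splitting GDD.

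With this ingredient in hand, Theorem~\ref{FC} immediately yields a splitting GDD$(t,k\times c,vc)$ of type $[\,c|G|:G\in\G\,]$. Since the input GDD has $n_i$ groups of size $g_i$ for $1\leq i\leq s$, each such group gives rise to a group of size $cg_i$ in the output, so the resulting type is $(cg_1)^{n_1}\cdots(cg_s)^{n_s}$, exactly as claimed. The construction is otherwise a direct application of the Fundamental Construction; the only genuinely substantive point is the verification of~(iv) above, which hinges on aligning the rows of the trivial block with the groups of $X_A$.
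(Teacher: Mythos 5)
Your proof is correct and follows exactly the paper's route: the corollary is deduced from Theorem~\ref{FC} by using the trivial single-block splitting GDD$(t,k\times c,kc)$ of type $c^k$ as the ingredient for every block, and reading off the resulting type. The only difference is that you spell out the existence and covering property of the trivial splitting GDD, which the paper simply asserts; this is a welcome but minor elaboration, not a different argument.
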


As shown by Ge {\em et al.} \cite{Geetal:2005},
we can also fill in the groups of a splitting GDD with a splitting 2-design
to obtain new splitting 2-designs.

\begin{proposition}[Filling-In Groups]
\label{fillingroups}
Let $(X,\G,\A)$ be a splitting {\rm GDD}$(2,k\times c,v)$. If for each $G\in\G$,
there exists a splitting $2$-$(|G|+1,k\times c,1)$ design, then there exists a
splitting $2$-$(v+1,k\times c,1)$ design.
\end{proposition}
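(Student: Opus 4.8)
The plan is to adjoin a single new point to $X$ and build the design on $X'=X\cup\{\infty\}$, where $\infty\notin X$, so that $|X'|=v+1$. The blocks of the new design will consist of the blocks $\A$ of the given splitting GDD together with, for each group, the blocks of a filling-in design placed on that group augmented by $\infty$. Concretely, for each $G\in\G$ let $(G\cup\{\infty\},\mathcal{D}_G)$ denote a splitting $2$-$(|G|+1,k\times c,1)$ design, which exists by hypothesis, and set
\[
\A' = \A \cup \bigcup_{G\in\G}\mathcal{D}_G .
\]
First I would check the structural requirements: every block of $\A$ is a $k\times c$ array over $X\subseteq X'$, and every block of each $\mathcal{D}_G$ is a $k\times c$ array over $G\cup\{\infty\}\subseteq X'$, in each case with every point occurring at most once. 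Thus $(X',\A')$ satisfies the array conditions (i)--(ii) of a splitting $2$-design, and it remains only to verify the coverage condition (iii).

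The heart of the argument is to show that every pair $\{y_1,y_2\}\in\binom{X'}{2}$ occurs in two different rows of exactly one block of $\A'$, and I would split this into three cases. If $y_1,y_2\in X$ lie in distinct groups, then $\A$ covers the pair exactly once by the defining property of the splitting GDD, while no $\mathcal{D}_G$ can cover it since each $\mathcal{D}_G$ uses only points of $G\cup\{\infty\}$; hence the pair is covered exactly once. If exactly one of the points is $\infty$, say $y_2=\infty$ and $y_1\in G$, then the blocks of $\A$ contribute nothing because they avoid $\infty$, whereas $\mathcal{D}_G$ covers $\{y_1,\infty\}$ exactly once and no other $\mathcal{D}_{G'}$ contains $y_1$; again exactly one coverage. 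Finally, if $y_1,y_2\in X$ lie in a common group $G$, then $\mathcal{D}_G$ covers the pair exactly once, no other filling-in design contains both points, and the blocks of $\A$ must contribute nothing.

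The main obstacle, and the point demanding care, is precisely this last case: I must argue that no block of the splitting GDD places two points of a single group in two distinct rows. This is exactly the group-respecting property of a (splitting) GDD, namely that each block meets every group in at most one point; consequently a same-group pair never appears in two different rows of a block of $\A$ and is picked up solely by the corresponding $\mathcal{D}_G$, exactly once. Combining the three cases shows that every pair of $X'$ is covered exactly once, so $(X',\A')$ is a splitting $2$-$(v+1,k\times c,1)$ design, which completes the proof.
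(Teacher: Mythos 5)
Your construction and case analysis follow the standard filling-in-groups argument --- the same approach the paper uses for the analogous statement about splitting candelabra systems (Proposition \ref{fillin3}); for the present proposition the paper gives no proof at all, simply citing Ge, Miao and Wang. Your overall structure is right, and you correctly isolate the one delicate point: a pair of points lying in a common group must not be covered (in distinct rows) by any block of the splitting GDD, or else $\A'$ would cover it twice.

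However, your justification of that point is false as stated. In a splitting GDD with $c\geq 2$ it is \emph{not} true that ``each block meets every group in at most one point'': a block is a $k\times c$ array, and in the splitting GDDs that actually arise --- in particular every splitting GDD produced by Corollary \ref{multiply}, which is exactly what this proposition is applied to in the paper --- a block meets a group in up to $c$ points, namely an entire row. The property you need is the weaker, row-wise one: the intersection of a block with a group lies within a single row, so that two points of the same group never occupy two distinct rows of one block. Note also that this property does not follow from the definition of splitting GDD as this paper states it, since conditions (i)--(iv) there only constrain $t$-subsets having at most one point per group and say nothing about same-group pairs; it must instead be invoked as part of the definition in the cited literature (Ge--Miao--Wang, Wang), where ``no pair of points from the same group occurs in distinct rows of any block'' is required explicitly. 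With that reading your case (3) is immediate and the rest of your argument stands; without it, the step as you justified it fails.
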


\subsection{State of Affairs}

The following theorem summarizes the state of knowledge on the existence of splitting
$t$-designs with $\lambda=1$.

\begin{theorem}[Du \cite{Du:2004}, Ge {\em et al.} \cite{Geetal:2005}, Wang \cite{Wang:2006}, Wang and Su \cite{WangSu:2010}]
\label{state}
The necessary divisibility conditions (of Proposition \ref{necessary}) are also sufficient
for the existence of a splitting $2$-$(v,k\times c,1)$ design when
\begin{enumerate}[(i)]
\item $(k,c)=(2,2n)$, for any positive integer $n$;
\item $(k,c)=(2,3)$, except for $v=10$;
\item $(k,c)=(3,2)$, except for $v=9$;
\item $(k,c)=(3,3)$, with the possible exception of $v=55$;
\item $(k,c)=(4,2)$, with the possible exception of $v\in\{49,385\}$.
\end{enumerate}
In addition, there exists a $2$-$(v,3\times 4,1)$ design for all $v\equiv 1\bmod 96$.
\end{theorem}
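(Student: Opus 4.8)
The plan is to treat each parameter pair $(k,c)$ separately and, in every case, to show that the necessary divisibility conditions of Proposition~\ref{necessary} are also sufficient. For fixed $(k,c)$ those conditions collapse to a single congruence on $v$ modulo some integer $m=m(k,c)$, so the goal reduces to realizing every admissible residue class for all but finitely many $v$. The overarching strategy is the classical one from design theory (Wilson's approach for BIBDs): settle a sparse set of ``seed'' orders by \emph{direct} constructions, propagate existence to all large $v$ by \emph{recursive} constructions, and then dispose of the finitely many remaining small orders by hand.

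For the direct constructions I would look for base blocks over $\Gamma=\mathbb{Z}_v$, exactly as in Example~\ref{151}: a single array, or a short list of arrays, whose developed translates $\{B+g:g\in\Gamma\}$ place every unordered pair of points in two different rows exactly once. This amounts to solving a system of difference conditions among the row entries, which for prime or prime-power $v$ can often be handled uniformly (via primitive roots or cyclotomic classes) and for sporadic $v$ by computer search. These seeds double as the small-order ingredient designs needed in the recursive step. Note that the two $k=2$ families, cases~(i) and~(ii), are especially amenable to such algebraic constructions, since for $k=2$ the ``GDD'' skeleton is just a complete multipartite structure.

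The recursive machinery is where Theorems~\ref{FC} and~\ref{GDDresults}, Corollary~\ref{multiply}, and Proposition~\ref{fillingroups} do the work, principally for the $k\in\{3,4\}$ cases. Starting from an ordinary GDD$(2,k,v)$ of a convenient type, whose existence is guaranteed by Theorem~\ref{GDDresults}, Corollary~\ref{multiply} inflates it by a factor $c$ to a splitting GDD$(2,k\times c,vc)$; more generally the Fundamental Construction (Theorem~\ref{FC}) permits nontrivial ingredient splitting GDDs on each block, giving flexibility over the resulting group type. Filling the groups of such a splitting GDD with seed splitting $2$-$(|G|+1,k\times c,1)$ designs via Proposition~\ref{fillingroups} then yields a splitting $2$-$(v+1,k\times c,1)$ design. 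Iterating this with a handful of GDD types and seed orders covers every sufficiently large admissible $v$; the closing assertion about $2$-$(v,3\times 4,1)$ designs for $v\equiv 1\bmod 96$ follows the same template with GDD$(2,4,\cdot)$ ingredients.

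The hard part will be the interface between ``recursively reachable'' and ``genuinely small.'' The recursion cannot establish the nonexistence of the true exceptions ($v=10$ for $(2,3)$, $v=9$ for $(3,2)$), nor decide the possible exceptions ($v=55$ for $(3,3)$ and $v\in\{49,385\}$ for $(4,2)$); each of these demands a dedicated counting/nonexistence argument or an exhaustive search. Equally delicate is certifying that the seed set is rich enough that, once the exceptions are excised, no small admissible order is left uncovered by the recursion. This is precisely the careful case analysis carried out in the cited works of Du, Ge~\emph{et al.}, Wang, and Wang--Su, whose individual results assemble to give the stated theorem.
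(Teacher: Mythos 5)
The paper gives no proof of this theorem at all: it is a survey statement importing results wholesale from the cited works of Du, Ge \emph{et al.}, Wang, and Wang--Su, and your proposal --- which outlines the standard seed-construction-plus-GDD-recursion methodology and then explicitly defers to those same references for the actual case analysis --- is an accurate description of how those works proceed, so it matches the paper's treatment as closely as is possible. The only substantive point worth adding is that the definite exceptions ($v=10$ for $(k,c)=(2,3)$ and $v=9$ for $(k,c)=(3,2)$) do not require a ``dedicated'' argument beyond what the paper itself supplies: both are instances of Proposition \ref{nonexistence}, i.e.\ of Huang's lower bound on the number of complete multipartite graphs needed to decompose $K_v$, since $(k-1)c^2+1$ equals $10$ and $9$ in these two cases.
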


\section{Nonexistence and Asymptotic Existence}

Let $\lambda$ be a positive integer. The complete (loopless) multigraph on $v$ vertices,
denoted $\lambda K_v$,
is the graph where every pair of distinct vertices is connected by $\lambda$ edges.
Let $G$ be a simple graph without isolated vertices. A {\em $G$-design of order $v$ and index
$\lambda$} is a partition of edge set of $\lambda K_v$ into subgraphs, each of which
is isomorphic to $G$. If $e(G)$ denotes the number of edges in $G$, and $d(G)$ denotes
the greatest
common divisor of the degrees of vertices in $G$, then simple counting shows that the conditions
\begin{enumerate}[(i)]
\item $\lambda v(v-1)\equiv 0\bmod{2e(G)}$, and
\item $\lambda (v-1)\equiv 0\bmod{d(G)}$
\end{enumerate}
are necessary for the existence of a $G$-design of order $v$ and index $\lambda$.
A celebrated result of Wilson \cite{Wilson:1976} states that these necessary conditions are
also asymptotically sufficient.

\begin{theorem}[Wilson \cite{Wilson:1976}]
\label{Wilsonasymp}
Let $G$ be a simple graph without isolated vertices. Then there exists a constant $v_0$
depending only on $G$ and $\lambda$ such that a $G$-design of order $v$ and index
$\lambda$ exist for all $v\geq v_0$ satisfying $\lambda v(v-1)\equiv 0\bmod{2e(G)}$
and $\lambda(v-1)\equiv 0\bmod{d(G)}$.
\end{theorem}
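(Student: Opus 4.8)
The plan is to show the two stated congruences are asymptotically sufficient by a recursive, ingredient-based construction rather than a single closed-form design. Necessity is a counting argument I would dispose of first: since $\lambda K_v$ has $\lambda\binom{v}{2}$ edges and each copy of $G$ consumes $e(G)$ of them, $e(G)\mid\lambda\binom{v}{2}$, which is $2e(G)\mid\lambda v(v-1)$; and since the $\lambda(v-1)$ edges at any fixed vertex are partitioned among the copies of $G$ through that vertex, each contributing a degree of $G$ (a multiple of $d(G)$), we get $d(G)\mid\lambda(v-1)$. For sufficiency I would fix $G$ and $\lambda$ and try to realize every large $v$ lying in the admissible residue classes.

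The engine of the construction is a recursion on group divisible designs. I would work with \emph{$G$-GDDs}, the GDD-analogue of a $G$-design (a decomposition of a complete multipartite graph into copies of $G$), and a Fundamental-Construction lemma that is the non-splitting analogue of Theorem \ref{FC}: inflate a master GDD$(2,k,v)$ by replacing each block with a $G$-GDD of uniform type, then fill the groups with smaller $G$-designs, exactly as groups are filled in Proposition \ref{fillingroups}. Two external inputs keep the recursion finite: (a) the asymptotic existence of transversal designs $\mathrm{TD}(k,m)$ for all $m\geq m_0(k)$ --- equivalently, enough mutually orthogonal Latin squares --- which supply the uniform GDDs needed to inflate; and (b) a finite seed of direct constructions of small $G$-GDDs and $G$-designs, built from difference families over finite fields or cyclic groups, in the spirit of the base-block method of Example \ref{151}.

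To pass from these finite ingredients to \emph{all} large $v$, I would invoke Wilson's existence theory for pairwise balanced designs \cite{Wilson:1972a,Wilson:1972b}: for a suitable finite set $K$ of admissible block sizes, every sufficiently large $v$ in the natural congruence classes carries a PBD with blocks from $K$. Each block of such a PBD spans a clique $K_\kappa$, $\kappa\in K$, which I would have arranged to decompose into copies of $G$ (a fixed, finite collection of small clique-decompositions); assembling these block by block yields a $G$-design of $\lambda K_v$. The set $K$ and the working index are chosen so that the closure congruences $\alpha(K)\mid v-1$ and $\beta(K)\mid v(v-1)$ coincide with the admissibility conditions $d(G)\mid\lambda(v-1)$ and $2e(G)\mid\lambda v(v-1)$, so that the ingredients patch together leaving no edges uncovered.

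The main obstacle is that almost all of the real difficulty is concentrated in the two deep black boxes: the PBD-closure theorem and the asymptotic existence of MOLS, both of which rest on finite-field difference families and genuinely analytic counting. A secondary but delicate obstacle is the index-and-degree bookkeeping: one must control the vertex degrees inside the seed $G$-GDDs so that after inflation and group-filling no partial copy of $G$ survives, and then reduce a conveniently large working index to every admissible $\lambda$ by taking disjoint unions and handling the resulting leaves. Because the conclusion is only asymptotic, every finite collection of stubborn small orders can be swept into the threshold $v_0$, and it is precisely this slack that lets the recursion close.
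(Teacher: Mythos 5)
First, a point of comparison: the paper does not prove Theorem \ref{Wilsonasymp} at all. It is quoted as a known result of Wilson \cite{Wilson:1976} and used purely as a black box to deduce the asymptotic existence of splitting $2$-designs. So your proposal can only be measured against Wilson's own argument from the literature, whose broad outline you have in fact reproduced: PBD-closure \cite{Wilson:1972a,Wilson:1972b}, inflation via transversal designs/MOLS, group-filling, and a finite seed of directly constructed small designs.

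The genuine gap is the seed. For the PBD-closure machinery to output all sufficiently large admissible $v$, you must first exhibit, for an \emph{arbitrary} simple graph $G$ and the given $\lambda$, a collection of orders carrying $G$-designs and $G$-GDDs whose congruence invariants generate exactly the classes $\lambda(v-1)\equiv 0\bmod{d(G)}$ and $\lambda v(v-1)\equiv 0\bmod{2e(G)}$. You dispose of this in one clause (``built from difference families over finite fields or cyclic groups, in the spirit of Example \ref{151}''), but this is precisely the hard, $G$-specific content of Wilson's 1976 paper: one needs finite-field difference-family constructions whose base blocks are copies of $G$ placed via field multiplication, together with counting arguments showing such families exist for all large prime powers in the right residue classes. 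Nothing in your sketch establishes that such seeds exist for every $G$; without them the recursion has nothing to recurse on, and your statement that the closure congruences ``coincide with the admissibility conditions'' is an assertion, not a proof. A secondary flaw: your plan to ``reduce a conveniently large working index to every admissible $\lambda$ by taking disjoint unions'' runs in the wrong direction --- unions of designs only \emph{increase} the index, so one cannot descend from a large working index to a given $\lambda$ this way; Wilson's theory instead handles the minimal admissible indices directly and builds upward.
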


Let $K_{k\times c}$ denote the complete $k$-partite graph, with each part having
$c$ vertices. A splitting $2$-$(v,k\times c,\lambda)$ design $(X,\A)$ is equivalent to a
$K_{k\times c}$-design of order $v$ and index $\lambda$ through the following
correspondence:
\begin{enumerate}[(i)]
\item a point in $X$ corresponds to a vertex in $\lambda K_v$,
\item a block $A\in\A$ corresponds to the complete $k$-partite graph, where the
$i$-th part contains $c$ vertices corresponding to the $c$ entries in row $i$ of $A$, $1\leq i\leq k$.
\end{enumerate}

Applying Theorem \ref{Wilsonasymp} with $G=K_{k\times c}$ then gives the following result
on the asymptotic existence of splitting 2-designs.

\begin{corollary}
There exists a constant $v_0$ depending only on $k$, $c$, and $\lambda$, such that a
splitting $2$-$(v,k\times c,\lambda)$ design exists for all $v\geq v_0$ satisfying
$\lambda v(v-1)\equiv 0\bmod{c^2k(k-1)}$ and
$\lambda(v-1)\equiv 0\bmod{c(k-1)}$.
\end{corollary}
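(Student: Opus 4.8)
The plan is to derive this corollary as an immediate consequence of the equivalence between splitting $2$-$(v,k\times c,\lambda)$ designs and $K_{k\times c}$-designs, combined with Wilson's asymptotic theorem (Theorem \ref{Wilsonasymp}). Since that equivalence has already been established just above, the only real work is to compute the two graph invariants $e(K_{k\times c})$ and $d(K_{k\times c})$ appearing in Wilson's necessary conditions, and to check that they reproduce exactly the divisibility conditions stated in the corollary.

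First I would count the edges. The complete $k$-partite graph $K_{k\times c}$ has $kc$ vertices, and its edges are precisely those pairs of vertices lying in different parts. Thus
\begin{equation*}
e(K_{k\times c}) = \binom{kc}{2} - k\binom{c}{2} = \frac{kc(kc-1) - kc(c-1)}{2} = \frac{c^2k(k-1)}{2},
\end{equation*}
so that $2e(K_{k\times c}) = c^2k(k-1)$. Next I would compute the degree invariant. Every vertex of $K_{k\times c}$ is adjacent to all $c(k-1)$ vertices outside its own part and to none inside it, so the graph is regular of degree $c(k-1)$; consequently $d(K_{k\times c}) = c(k-1)$.

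Substituting these values into Wilson's two necessary conditions, $\lambda v(v-1)\equiv 0\bmod{2e(G)}$ and $\lambda(v-1)\equiv 0\bmod{d(G)}$, yields precisely $\lambda v(v-1)\equiv 0\bmod{c^2k(k-1)}$ and $\lambda(v-1)\equiv 0\bmod{c(k-1)}$, matching the hypotheses of the corollary. Since $t=2\leq k$ forces $k\geq 2$, the degree $c(k-1)$ is positive and $K_{k\times c}$ has no isolated vertices, so Theorem \ref{Wilsonasymp} applies with $G=K_{k\times c}$ and furnishes a constant $v_0$, depending only on $k$, $c$, and $\lambda$, beyond which the corresponding $K_{k\times c}$-designs exist for all admissible $v$. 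Translating back through the equivalence then gives the claimed splitting designs.

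I do not anticipate a genuine obstacle here: once the equivalence and Wilson's theorem are in hand, the argument is a routine invariant computation. The only point requiring a moment's care is confirming that $K_{k\times c}$ has no isolated vertices, so that Wilson's hypothesis is satisfied; this is immediate from $k\geq 2$.
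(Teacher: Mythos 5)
Your proposal is correct and is precisely the paper's argument: the paper derives this corollary by applying Theorem \ref{Wilsonasymp} to $G=K_{k\times c}$ via the stated equivalence with splitting $2$-designs, and your computations $2e(K_{k\times c})=c^2k(k-1)$ and $d(K_{k\times c})=c(k-1)$ correctly supply the routine details the paper leaves implicit.
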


We end this section with a nonexistence result. Huang \cite{Huang:1991} has shown
that the number of complete $k$-partite graphs required to partition the edge set
of $K_v$ is at least $\left\lceil (v-1)/(k-1)\right\rceil$. This has the following consequence.

\begin{proposition}
\label{nonexistence}
There does not exist a splitting $2$-$((k-1)c^2+1,k\times c,1)$ design, for all
$k,c\geq 2$.
\end{proposition}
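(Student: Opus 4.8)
The plan is to argue by contradiction, exploiting the equivalence between splitting $2$-designs and complete multipartite graph decompositions recorded just before Theorem~\ref{Wilsonasymp}, together with Huang's lower bound on the number of blocks. Suppose, for a contradiction, that a splitting $2$-$((k-1)c^2+1,k\times c,1)$ design $(X,\A)$ existed, and write $v=(k-1)c^2+1$. By that correspondence, such a design is precisely a $K_{k\times c}$-design of order $v$ and index $1$, that is, a partition of the edge set of $K_v$ into $b:=|\A|$ copies of the complete $k$-partite graph $K_{k\times c}$.

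First I would pin down $b$ by a double count of edges. Since the decomposition is a partition, $b\cdot e(K_{k\times c})=\binom{v}{2}$. A short computation gives $e(K_{k\times c})=\binom{kc}{2}-k\binom{c}{2}=\tfrac{1}{2}k(k-1)c^2$, and substituting $v-1=(k-1)c^2$ yields
\begin{equation*}
b=\frac{v(v-1)}{k(k-1)c^2}=\frac{(k-1)c^2+1}{k}.
\end{equation*}
Next I would invoke the bound of Huang \cite{Huang:1991}. Because each block is a complete $k$-partite graph, the number of such graphs needed to partition the edges of $K_v$ is at least $\lceil (v-1)/(k-1)\rceil$; with $v-1=(k-1)c^2$ this quantity equals $c^2$, so $b\geq c^2$. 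Comparing the two expressions for $b$ forces $\tfrac{1}{k}\big((k-1)c^2+1\big)\geq c^2$, which simplifies to $1\geq c^2$ and contradicts $c\geq 2$. Hence no such design can exist.

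There is no serious obstacle here: once the graph-theoretic reformulation is in hand, the result is a one-line counting contradiction. The only points requiring care are the correct evaluation of $e(K_{k\times c})$ (remembering to subtract the $k\binom{c}{2}$ edges lying within the parts) and the observation that edge counting pins $b$ to a value strictly below Huang's lower bound exactly when $c\geq 2$. For $c=1$ the two estimates coincide (both equal $1$, realized by the trivial $2$-$(k,k,1)$ design), which is consistent with the hypothesis $c\geq 2$ being genuinely needed.
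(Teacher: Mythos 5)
Your proof is correct and follows essentially the same route as the paper's: both use the correspondence with $K_{k\times c}$-decompositions of $K_v$, compute the number of blocks as $((k-1)c^2+1)/k$, and contradict Huang's lower bound of $\lceil (v-1)/(k-1)\rceil = c^2$. The only difference is that you explicitly derive the block count by the edge-counting identity $b\cdot e(K_{k\times c})=\binom{v}{2}$, a detail the paper states without justification.
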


\begin{proof}
Suppose a splitting $2$-$((k-1)c^2+1,k\times c,1)$ design exists. The
number of blocks in this splitting 2-design is $((k-1)c^2+1)/k$. This would mean that we
can partition the edge set of $K_{(k-1)c^2+1}$ into $((k-1)c^2+1)/k$ complete
$k$-partite subgraphs. This is impossible by Huang's result, since
$\left\lceil (k-1)c^2/(k-1)\right\rceil=c^2>((k-1)c^2+1)/k$.
\end{proof}

The definite exceptions in Theorem \ref{state} are special cases of Proposition \ref{nonexistence}.

\section{Splitting 2-Designs}

In this section, we establish the
existence of an infinite family of splitting $2$-$(v,3\times 5,1)$ designs,
and remove $v=385$ as a possible exception from Theorem \ref{state}(v).

\begin{proposition}
There exists a splitting $2$-$(v,3\times 5,1)$ design for all $v\equiv 1\bmod{150}$, except
possibly when $v=301$.
\end{proposition}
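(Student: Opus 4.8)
The plan is to realize every admissible order as $v=150n+1$ and to build the design recursively from a single ``seed'' splitting $2$-design, using the group-divisible machinery of Section~2. First I would dispose of the smallest case $n=1$, that is $v=151$, directly: the base block exhibited in Example~\ref{151} already generates a splitting $2$-$(151,3\times 5,1)$ design over $\mathbb{Z}_{151}$, so nothing further is needed there.

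For the generic case I would aim to apply Filling-In Groups (Proposition~\ref{fillingroups}) with the seed of Example~\ref{151} as the ingredient placed in each group. Since that ingredient has order $151$, the groups to be filled must have size $150$, and this requirement dictates the whole construction. Concretely, I would start from an ordinary GDD$(2,3,30n)$ of type $30^n$, inflate it by the factor $c=5$ via Corollary~\ref{multiply} to obtain a splitting GDD$(2,3\times 5,150n)$ of type $150^n$, and then fill each group of size $150$ with the splitting $2$-$(151,3\times 5,1)$ design to produce a splitting $2$-$(150n+1,3\times 5,1)$ design; as $v=150n+1$, this is exactly the required object. The existence of the ingredient GDD$(2,3,30n)$ of type $30^n$ is precisely Theorem~\ref{GDDresults}(i): taking $g=30$, the quantity $(n-1)g$ is even and $g^2=900\equiv 0\bmod 6$, so the two arithmetic conditions hold automatically and the GDD exists for every $n\ge 3$.

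Combining the two cases covers $n=1$ together with all $n\ge 3$, that is, every $v\equiv 1\bmod 150$ save $v=301$. The main obstacle is precisely this remaining value $n=2$, $v=301$: the recursion above would require a GDD$(2,3,60)$ of type $30^2$, but a GDD with block size $3$ and only two groups cannot exist, since every block of size $3$ would then be forced to meet some group in at least two points; accordingly Theorem~\ref{GDDresults}(i) requires $n\ge 3$. No other uniform group size feeds the same seed, and there is no alternative filling design of a compatible order available, so $v=301$ resists this method and must be recorded as a possible exception. The other delicate point, logically prior to everything, is the production of the seed itself in Example~\ref{151}: that single base block is what drives the entire family, and it is obtained by a targeted search rather than by the recursive scheme.
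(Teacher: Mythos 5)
Your proposal is correct and follows essentially the same route as the paper: the seed splitting $2$-$(151,3\times 5,1)$ design of Example~\ref{151}, a GDD$(2,3,30m)$ of type $30^m$ (for $m\geq 3$) from Theorem~\ref{GDDresults}(i), inflation by $c=5$ via Corollary~\ref{multiply}, and Filling-In Groups (Proposition~\ref{fillingroups}) to reach $v=150m+1$. Your added remarks — verifying the divisibility conditions for $g=30$ and explaining why $m=2$ (i.e., $v=301$) escapes the method — are accurate and only make explicit what the paper leaves implicit.
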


\begin{proof}
A splitting $2$-$(151,3\times 5,1)$ design is exhibited in Example \ref{151}, so let $v\geq 451$.
Write $v=150m+1$, for some integer $m\geq 3$. A GDD$(2,\{3\},30m)$ of type $30^m$
exists by Theorem \ref{GDDresults}(i). Apply Corollary \ref{multiply} to obtain a splitting
GDD$(2,3\times 5,150m)$ of type $150^m$. Now fill in the groups of this splitting GDD with
a splitting $2$-$(151,3\times 5,1)$ design (which has been constructed in Example \ref{151})
to obtain a splitting $2$-$(150k+1,3\times 5,1)$ design.
\end{proof}


\begin{proposition}
There exists a splitting $2$-$(385,4\times 2,1)$ design.
\end{proposition}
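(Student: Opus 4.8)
The plan is to build the design recursively, by inflating an ordinary GDD and then filling in its groups, using as ingredients only those splitting $2$-$(v,4\times 2,1)$ designs whose existence is already guaranteed by Theorem~\ref{state}(v). Since the divisibility conditions of Proposition~\ref{necessary} for $(k,c)=(4,2)$ force $v\equiv 1\bmod{48}$, the smallest nontrivial fill-in block would be a splitting $2$-$(49,4\times 2,1)$ design; but $v=49$ is itself one of the two undecided cases of Theorem~\ref{state}(v), so the recursion must be arranged so that it never calls on $v=49$ (nor, of course, on $v=385$). This is the one genuine constraint in the argument.

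First I would start from a GDD$(2,4,192)$ of type $48^4$. Its parameters meet the conditions of Theorem~\ref{GDDresults}(ii): one checks $n=4\geq 4$, $(n-1)g=144\equiv 0\bmod{3}$, and $n(n-1)g^2=27648\equiv 0\bmod{12}$, and $(g,n)=(48,4)$ is not one of the two excluded pairs $\{(2,4),(6,4)\}$, so such a GDD exists. Next I would inflate by a factor $c=2$: by Corollary~\ref{multiply}, the GDD of type $48^4$ yields a splitting GDD$(2,4\times 2,384)$ of type $96^4$.

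Each group now has size $96$, so filling it in requires a splitting $2$-$(97,4\times 2,1)$ design, and this is precisely where the choice of group size pays off. Since $97\equiv 1\bmod{48}$ is admissible and $97\notin\{49,385\}$, Theorem~\ref{state}(v) supplies such a design. By contrast, had I inflated a type-$24^8$ GDD (which also exists, by Theorem~\ref{GDDresults}(ii)) to obtain the equally plausible type $48^8$, the fill-in would have demanded the undecided $2$-$(49,4\times 2,1)$ design; routing through group size $96$ rather than $48$ is exactly what circumvents the open case.

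Finally I would apply the Filling-In Groups construction of Proposition~\ref{fillingroups} to the splitting GDD$(2,4\times 2,384)$ of type $96^4$, filling each of its four groups with a splitting $2$-$(97,4\times 2,1)$ design. This produces a splitting $2$-$(384+1,4\times 2,1)$ design, that is, a splitting $2$-$(385,4\times 2,1)$ design, as required. There is no real analytic obstacle here beyond the bookkeeping already described: every tool is in place, and the only point requiring care is the one flagged above, namely that the two values arising most naturally from the divisibility analysis, $49$ and $385$, are undecided, so the recursion must be steered through $v=97$.
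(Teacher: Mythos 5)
Your proposal is correct and follows exactly the paper's construction: a GDD$(2,4,192)$ of type $48^4$ from Theorem~\ref{GDDresults}(ii), inflated by Corollary~\ref{multiply} to a splitting GDD$(2,4\times 2,384)$ of type $96^4$, then filled in via Proposition~\ref{fillingroups} with splitting $2$-$(97,4\times 2,1)$ designs from Theorem~\ref{state}(v). The extra verifications you include (the GDD parameter check, the admissibility of $97$, and the observation that group size $96$ rather than $48$ avoids the open case $v=49$) are sound but simply make explicit what the paper leaves implicit.
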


\begin{proof}
A GDD$(2,\{4\},192)$ of type $48^4$ exists by Theorem \ref{GDDresults}(ii). Apply 
Corollary \ref{multiply} to obtain a splitting GDD$(2,4\times 2,384)$ of type $96^4$.
Now fill in the groups of this splitting GDD with a splitting $2$-$(97,4\times 2,1)$ design
(which exists by Theorem \ref{state}) to obtain a splitting $2$-$(385,4\times 2,1)$ design.
\end{proof}

\section{Splitting 3-Designs}

In this section, we establish the existence of the first known infinite family of
splitting 3-designs with $c>1$.

Let $t$, $k$, and $v$ be nonnegative integers. A {\em $(t,k)$ candelabra system}
of order $v$ is a quadruple $(X,S,\G,\A)$ that satisfies the following properties:
\begin{enumerate}[(i)]
\item $X$ is a set of $v$ elements, called {\em points};
\item $S\subseteq X$, called the {\em stem};
\item $\G=\{G_1,\ldots,G_m\}$ is a partition of $X\setminus S$ (elements of $\G$ are called
{\em groups});
\item $\A\subseteq\binom{X}{k}$, whose elements are called {\em blocks};
\item every $T\in\binom{X}{t}$ with $|T\cap(S\cup G_i)|<t$ for all $i$, is contained in
a block in $\A$.
\end{enumerate}
The {\em type} of a $(t,k)$ candelabra  system $(X,S,\G,\A)$ is the multiset
$[|G|:G\in\G]$. A $(t,k)$ candelabra system of type $g_1^{n_1}\cdots g_r^{n_r}$ with
a stem of size $s$ is denoted $(t,k)$-CS$(g_1^{n_1}\cdots g_r^{n_r}:s)$.

Here, we introduce the notion of splitting candelabra systems.

A {\em splitting $(t,k\times c)$ candelabra system} of order $v$ is a quadruple
$(X,S,\G,\A)$ that satisfies the following properties:
\begin{enumerate}[(i)]
\item $X$ is a set of $v$ elements, called {\em points};
\item $S\subseteq X$, called the {\em stem};
\item $\G=\{G_1,\ldots,G_m\}$ is a partition of $X\setminus S$ (elements of $\G$ are called
{\em groups});
\item $\A$ is a set of $k\times c$ arrays, called {\em blocks}, with entries from $X$, such
that each point of $X$ occurs at most once in each block;
\item for every $\{x_i:1\leq i\leq t\}\in\binom{X}{t}$ with $|T\cap(S\cup G_i)|<t$ for all $i$, there is
exactly one block in which $x_i$, $1\leq i\leq t$, occur in $t$ different rows.
\end{enumerate}
We use the same notation for splitting $(t,k)$ candelabra systems as those for $(t,k)$
candelabra systems.

The following theorem is an extension of Hartman's Fundamental Construction
\cite{Hartman:1994} from $(3,k)$ candelabra systems to splitting $(3,k)$ candelabra systems.

\begin{theorem}
\label{FC3}
If there exist a $(3,k)$-{\rm CS}$(g_1^{n_1}\cdots g_r^{n_r}:s)$, 
a splitting $(3,k'\times c)$-{\rm CS}$(m^{k-1}:a)$, and
a splitting {\rm GDD}$(3,k'\times c,mk)$ of type $m^k$, then
there exists a splitting $(3,k'\times c)$-{\rm CS}$((g_1m)^{n_1}\cdots (g_rm)^{n_r}:m(s-1)+a)$.
\end{theorem}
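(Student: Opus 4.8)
The plan is to lift Hartman's Fundamental Construction \cite{Hartman:1994} to the splitting setting. Write the given $(3,k)$-CS$(g_1^{n_1}\cdots g_r^{n_r}:s)$ as $(X,S,\G,\A)$ and fix a distinguished stem point $\infty\in S$. I would inflate every non-distinguished point by a factor of $m$, setting $X^\ast=\bigl((X\setminus\{\infty\})\times\{1,\dots,m\}\bigr)\cup Y$ for a fresh set $Y$ of $a$ points. The output groups are the inflated master groups $G\times\{1,\dots,m\}$, of sizes $g_im$, and the output stem is $\bigl((S\setminus\{\infty\})\times\{1,\dots,m\}\bigr)\cup Y$, of size $m(s-1)+a$; this already gives the asserted type and stem size. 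I build the blocks one master block at a time: on each $A\in\A$ with $\infty\notin A$, I place on the $mk$ inflated points $A\times\{1,\dots,m\}$ (grouped by fibers) a copy of the given splitting GDD$(3,k'\times c,mk)$ of type $m^k$; on each $A=\{\infty,a_1,\dots,a_{k-1}\}\in\A$ with $\infty\in A$, I place on $\bigl(\{a_1,\dots,a_{k-1}\}\times\{1,\dots,m\}\bigr)\cup Y$ a copy of the given splitting $(3,k'\times c)$-CS$(m^{k-1}:a)$, identifying its stem with the common set $Y$ and its $k-1$ groups with the fibers. That each block is a legitimate $k'\times c$ array with no repeated point is immediate from the ingredients.

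It remains to verify the coverage property. I would introduce the projection $\pi\colon X^\ast\to X$ sending $(p,\alpha)\mapsto p$ and collapsing all of $Y$ to $\infty$, and observe that an output point lies in an output group exactly when its image lies in a master group; hence a triple $T\subseteq X^\ast$ meets at least two output groups if and only if $P:=\pi(T)$ meets at least two distinct master groups. A short check then gives $|P|\in\{2,3\}$ for every such admissible $T$, and I split accordingly. If $|P|=3$, then $P$ is a valid master triple, so it lies in a unique master block $A$; the single ingredient on $A$ contains $T$ either in three distinct fibers, or---when one point of $T$ lies in $Y$---in two fibers together with the stem, and in both situations covers $T$ exactly once. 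If $|P|=2$, admissibility forces $P=\{p,q\}$ with $p,q$ in distinct groups and one fiber doubled; then $\{\infty,p,q\}$ is a valid master triple lying in a unique block $A^\ast$ (which contains $\infty$), and the splitting-CS ingredient on $A^\ast$ covers $T$---two points in one fiber, one in another---exactly once.

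The final task is global uniqueness: no second ingredient covers $T$. Two observations drive this. First, a splitting GDD ingredient never covers a triple having two points in a common fiber, so the doubled-fiber triples of the $|P|=2$ case are visible only to splitting-CS ingredients. Second, any ingredient covering $T$ must sit on a master block containing $P$ (when $|P|=3$) or containing $P\cup\{\infty\}$ (when $|P|=2$), and such a block is unique by the master CS; since all splitting-CS ingredients reuse the same stem $Y$ and fibers over shared master points are common to several ingredients, this uniqueness of the underlying master block is exactly what pins down the covering ingredient. I expect this bookkeeping around $\infty$ and $Y$ to be the main obstacle, the crucial point being that admissible triples with a doubled fiber---invisible to the GDD ingredients---are precisely those rescued by the block through $\infty$. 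Inadmissible output triples carry no requirement, so any incidental over-coverage of them (for instance triples with two inflated stem points) is harmless. Setting $c=1$ returns Hartman's original construction, a reassuring consistency check.
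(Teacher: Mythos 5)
Your proposal is correct and is essentially the paper's own proof: the identical construction (inflate each non-distinguished point by $m$, place the splitting $(3,k'\times c)$-CS$(m^{k-1}:a)$ ingredients on the master blocks through $\infty$ with their stems all identified with the common set $Y$, and the splitting GDD ingredients of type $m^k$ on the remaining blocks), yielding the same groups, stem, and block set. The paper dismisses the coverage verification with ``it is easy to check''; your case analysis on $|\pi(T)|\in\{2,3\}$, including the observation that doubled-fiber triples are handled precisely by the unique master block through $\{\infty,p,q\}$, supplies exactly that omitted check.
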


\begin{proof}
Let $(X,S,\G,\A)$ be a $(3,k)$-CS$(g_1^{n_1}\cdots g_r^{n_r}:s)$, and let 
$\infty$ be a distinguished
point in $S$. For $Y\subseteq X$, define the set of points
\begin{equation*}
P(Y)= ((Y\setminus\{\infty\}) \times \mathbb{Z}_m) \cup (\{\infty\}\times\mathbb{Z}_a).
\end{equation*}
Further define
\begin{align*}
S' &= ((S\setminus\{\infty\}) \times \mathbb{Z}_m) \cup (\{\infty\}\times\mathbb{Z}_a), \\
\G' &= \{G\times\mathbb{Z}_m:G\in\G\}.
\end{align*}

For each $A\in\A$ containing the point $\infty$, let
\begin{equation*}
(P(A),\{\infty\}\times\mathbb{Z}_a,\{\{x\}\times\mathbb{Z}_m:x\in A\setminus\{\infty\}\},\B_A)
\end{equation*}
be a splitting $(3,k'\times c)$-CS$(m^{k-1}:a)$, 
and for each $A\in\A$ not containing the point $\infty$,
let 
\begin{equation*}
(A\times\mathbb{Z}_m,\{\{x\}\times\mathbb{Z}_m: x\in A\}, \C_A)
\end{equation*}
be a splitting GDD$(3,k'\times c,3m)$ of type $m^k$.

It is easy to check that $(P(X),S',\G',\A')$, where
\begin{equation*}
\A' = \left(\bigcup_{A\in\A:\infty\in A}\B_A\right) \cup \left(\bigcup_{A\in\A:\infty\not\in A}\C_A\right),
\end{equation*}
is the required splitting $(3,k'\times c)$-CS$((g_1m)^{n_1}\cdots (g_rm)^{n_r}:m(s-1)+a)$.
\end{proof}

We can also
fill in the groups of a splitting candelabra system by splitting 3-designs to obtain
larger splitting 3-designs.

\begin{proposition}
\label{fillin3}
If there exists a splitting $(3,k\times c)$-{\rm CS}$(g_1^{n_1}\cdots g_r^{n_r}:s)$, where $s\leq 2$,
and there exists a splitting $3$-$(g_i+s,k\times c,1)$ design for each $i$, $1\leq i\leq r$,
then there exists a splitting $3$-$(s+\sum_{i=1}^r g_in_i, k\times c,1)$ design.
\end{proposition}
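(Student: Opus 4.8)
The plan is to realize the desired splitting $3$-design directly on the point set $X$ of the given candelabra system, taking its blocks to be the candelabra blocks together with the blocks of a fill-in design on each group enlarged by the stem. So let $(X,S,\G,\A)$ be the given splitting $(3,k\times c)$-{\rm CS}$(g_1^{n_1}\cdots g_r^{n_r}:s)$; then $|X|=s+\sum_{i=1}^r g_in_i$, which is precisely the order we are after. For each group $G\in\G$, say of size $g_i$, I would place on the point set $G\cup S$ (which has $g_i+s$ points) a splitting $3$-$(g_i+s,k\times c,1)$ design, available by hypothesis, and call its set of blocks $\B_G$. The candidate design is then $(X,\A')$ with $\A'=\A\cup\bigcup_{G\in\G}\B_G$.

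Property (ii) of a splitting design---each point occurring at most once per block---is immediate, since every block of $\A'$ is already a $k\times c$ array drawn from $\A$ or from some $\B_G$. The work is in verifying the covering condition: every $T\in\binom{X}{3}$ is covered (its three points lying in three different rows) by exactly one block of $\A'$. I would argue this by splitting the triples into two classes according to whether $T$ fits inside a single $S\cup G$.

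If $T\not\subseteq S\cup G$ for every $G\in\G$---equivalently $|T\cap(S\cup G)|<3$ for all $G$---then the candelabra property supplies exactly one block of $\A$ covering $T$, and no fill-in block can cover $T$, because the entries of any block of $\B_G$ lie in $G\cup S$ while $T\not\subseteq G\cup S$. If instead $T\subseteq S\cup G$ for some $G$, then $\A$ does not cover $T$ (these are exactly the triples excluded by the candelabra condition), whereas the fill-in design on $G\cup S$ covers it exactly once.

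The one delicate point, and where the hypothesis $s\leq 2$ does its work, is to guarantee that a triple of the second class lies in $S\cup G$ for a \emph{unique} $G$. Indeed, if $T\subseteq S\cup G$ and $T\subseteq S\cup G'$ with $G\neq G'$, then $T\subseteq(S\cup G)\cap(S\cup G')=S$ since distinct groups are disjoint; but $|S|=s\leq 2<3$, so no triple can sit inside $S$ and this cannot occur. Hence the two classes are mutually exclusive and exhaustive, each triple is covered exactly once, and $(X,\A')$ is the required splitting $3$-$(s+\sum_{i=1}^r g_in_i,k\times c,1)$ design. I expect this disjointness observation to be the only real obstacle; the rest is bookkeeping entirely parallel to the two-dimensional filling-in of Proposition \ref{fillingroups}.
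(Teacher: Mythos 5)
Your proposal is correct and is exactly the paper's construction: fill each $G\cup S$ with a splitting $3$-$(|G|+s,k\times c,1)$ design and take the union of those blocks with the candelabra blocks. The paper simply asserts the result without verification, whereas you spell out the case analysis on triples and the role of $s\leq 2$ in making the two cases exclusive, which is the right justification.
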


\begin{proof}
Let $(X,S,\G,\A)$ be a splitting $(3,k\times c)$-CS$(g_1^{n_1}\cdots g_r^{n_r}:s)$, where
$s\leq 2$. For each $G\in\G$, let $(G\cup S,\B_G)$ be a splitting 3-$(|G|+s,k\times c,1)$ design.
Then $(X,\A\cup (\cup_{G\in\G}\B_G))$ is the required splitting 
$3$-$(s+\sum_{i=1}^r g_in_i, k\times c,1)$ design.
\end{proof}

To apply Theorem \ref{FC3} and Proposition \ref{fillin3}, we require some splitting candelabra
systems to start with. 

\begin{lemma}
\label{small}
There exist a splitting $(3,3\times 2)$-{\rm CS}$(8^2:0)$ and a splitting
$(3,3\times 2)$-{\rm CS}$(8^2:2)$.
\end{lemma}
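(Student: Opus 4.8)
The plan is to construct each system directly by exhibiting base blocks under a cyclic group of order $8$, exactly as a single array was used to generate the splitting $2$-design in Example~\ref{151}. First I would fix the point sets so that the group acts nicely. For the $(3,3\times2)$-CS$(8^2:0)$ I take $X=\mathbb{Z}_8\times\{0,1\}$ with groups $G_\ell=\mathbb{Z}_8\times\{\ell\}$ and empty stem; for the $(3,3\times2)$-CS$(8^2:2)$ I take $X=(\mathbb{Z}_8\times\{0,1\})\cup\{\infty_0,\infty_1\}$ with the same two groups and stem $S=\{\infty_0,\infty_1\}$. In both cases let $\mathbb{Z}_8$ act by $g\colon(x,\ell)\mapsto(x+g,\ell)$ while fixing the stem pointwise; this action preserves $G_0$, $G_1$, and $S$, so it is an automorphism group of any candidate system.

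Next I count, and reduce the defining condition to a finite check. Each $3\times 2$ block has exactly $2^3=8$ transversals (triples using one entry from each of the three rows), and these are precisely the triples a block covers ``in three different rows''. A short calculation gives $\binom{16}{3}-2\binom{8}{3}=448$ required triples for the stemless system and $\binom{18}{3}-2\binom{10}{3}=576$ for the system with $s=2$, so the designs must contain $448/8=56$ and $576/8=72$ blocks, respectively. The key point is that $\mathbb{Z}_8$ acts freely on the set of \emph{required} triples: any triple fixed by the involution $g=4$ must have one point in the stem together with a swapped pair $(x,\ell),(x{+}4,\ell)$ inside a single group, hence lies in $S\cup G_\ell$ and is \emph{not} required (in the stemless system no triple is fixed at all), and elements of higher order reduce to this involution. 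Consequently every orbit of required triples has length $8$, there are exactly $56$ (resp.\ $72$) such orbits, and a system with full-length block orbits is determined by $7$ (resp.\ $9$) base blocks, whose $7\cdot 8=56$ (resp.\ $9\cdot 8=72$) transversals must represent these orbits.

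This turns ``exactly one block covers each required triple'' into a finite inequivalence condition: it suffices to verify that the transversals of the chosen base blocks are pairwise inequivalent under $\mathbb{Z}_8$ and that each is a required (not forbidden) triple. Indeed, the full family of covered triples is $\bigcup_{\tau}\mathbb{Z}_8\cdot\tau$ over base transversals $\tau$, so pairwise inequivalence makes these orbits disjoint, and the matching counts then force a bijection between base transversals and orbits of required triples, i.e.\ exact coverage. I would therefore search (by hand or by computer) for $7$ and $9$ base blocks meeting this condition and present them explicitly, developing them over $\mathbb{Z}_8$ to obtain the full block sets. The main obstacle is the search itself: the demand that the developed transversals tile the orbit set with no repetition and no forbidden triple is a tight exact-cover constraint, and one must also confirm that no base block is stabilized by $g=4$ (which would shorten its orbit and break the count). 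Once suitable base blocks are exhibited, the remaining verification is a routine enumeration of the associated differences.
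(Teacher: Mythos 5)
Your reduction framework is correct, and it is essentially the paper's own strategy: the paper proves Lemma \ref{small} by taking $X=\mathbb{Z}_{16}$ with the even and the odd residues as the two groups and developing seven $3\times 2$ base blocks under translation by $\{0,2,4,\ldots,14\}$ --- exactly your order-$8$ cyclic action on $\mathbb{Z}_8\times\{0,1\}$ written in different coordinates. (For the stem case the paper does not use nine full orbits as you propose; it adjoins to the stemless system $16$ explicit blocks, each having $\{x,y\}$ as one row, an even pair $\{2i,2i+2\}$ as another and an odd pair $\{2j+1,2j+3\}$ as the third, so that the $2\cdot 8\cdot 8=128$ required triples meeting the stem are each covered once. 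This is the same numerology as your ``$9$ base blocks'' plan, just organized differently.) Your orbit analysis is also sound: the action is indeed free on required triples, so there are exactly $448/8=56$ and $576/8=72$ orbits, and pairwise inequivalence of base-block transversals does force exact coverage.

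The genuine gap is that you never produce the base blocks. The entire content of Lemma \ref{small} is the existence of these configurations; your argument only shows that existence is equivalent to the success of a finite exact-cover search, which you explicitly leave open (``the main obstacle is the search itself''). Counting consistency does not imply that such a search succeeds --- the paper's own Proposition \ref{nonexistence} gives designs whose counting conditions hold but which do not exist, and Theorem \ref{state} records sporadic exceptions of the same kind --- so without exhibiting concrete arrays (or some other existence argument) the lemma remains unproved. A secondary, minor inaccuracy: the assertion that the systems ``must contain'' $56$ and $72$ blocks overstates what the definition forces. A splitting candelabra system is permitted to have blocks whose transversals are forbidden triples (the ``exactly one'' condition applies only to required triples), so the count is merely a lower bound, attained precisely when every transversal of every block is required; this does not damage your plan, since you may aim for the minimum, but it should not be stated as a necessity.
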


\begin{proof}
Let $X=\mathbb{Z}_{16}$ and $\G=\{ \{2i+j:0\leq i\leq 7\} : j \in\{0,1\} \}$.
Let 
\begin{align*}
\B = \left\{
\begin{pmatrix}
 0 & 4 \\
 6 & 9 \\
7 & 11
\end{pmatrix},\right.
\begin{pmatrix}
 0 & 14 \\
 1 & 4 \\
11 & 13
\end{pmatrix},
&\begin{pmatrix}
 0 & 5 \\
 8 & 10 \\
13 & 15
\end{pmatrix},
\begin{pmatrix}
 0 & 2 \\
 4 & 1 \\
7 & 15
\end{pmatrix}, \\
&\left.\begin{pmatrix}
 0 & 13 \\
 1 & 15 \\
2 & 12
\end{pmatrix},
\begin{pmatrix}
 0 & 13 \\
 1 & 9 \\
4 & 6
\end{pmatrix},
\begin{pmatrix}
 0 & 6 \\
 9 & 7 \\
14 & 15
\end{pmatrix}
\right\}.
\end{align*}
Then $(X,\G,\varnothing,\A)$, where
$\A=\cup_{B\in\B} \{ B+2i\bmod{16} : 0\leq i <8\}$,
is a splitting $(3,3\times 2)$-CS$(8^2:0)$.

Now let $S=\{x,y\}$ be such that $S\cap X=\varnothing$, and let
\begin{equation*}
\C = \left\{
\left(
\begin{array}{cc}
x & y \\
2i & 2i+2 \\
2j+1 & 2j+3
\end{array}
\right) : i,j\in\{0,2,4,6\}
\right\}.
\end{equation*}
Then $(X\cup\{x,y\},S,\G,\A\cup\C)$ is a splitting $(3,3\times 2)$-CS$(8^2:2)$.
\end{proof}

We now establish an infinite family of splitting 3-designs.

\begin{theorem}
A splitting $3$-$(v,3\times 2,1)$ design exists if and only if $v\equiv 2\bmod{8}$.
\end{theorem}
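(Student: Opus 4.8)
The plan is to separate the straightforward necessity from a sufficiency argument that reduces the entire family to a single small design.

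\emph{Necessity.} I would specialize Proposition~\ref{necessary} to $t=k=3$, $c=2$, $\lambda=1$, obtaining one congruence for each $s\in\{0,1,2,3\}$. The case $s=3$ is vacuous; $s=2$ gives $v-2\equiv 0\pmod 2$, so $v$ is even; $s=1$ gives $\binom{v-1}{2}\equiv 0\pmod 4$, i.e.\ $(v-1)(v-2)\equiv 0\pmod 8$, and since $v$ even makes $v-1$ odd this forces $v-2\equiv 0\pmod 8$, that is $v\equiv 2\pmod 8$. Finally the $s=0$ condition $\binom{v}{3}\equiv 0\pmod 8$ is automatic once $v\equiv 2\pmod 8$, because then $v(v-1)(v-2)$ is divisible by $16$ (from the even factor $v$ and from $v-2\equiv0\pmod 8$) and by $3$ (three consecutive integers), hence by $48$. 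Together with the standing requirement $ck\le v$, i.e.\ $v\ge 6$, the necessary condition is exactly $v\equiv 2\pmod 8$ with $v\ge 10$.

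\emph{Reduction for sufficiency.} My plan is to show it suffices to construct (a)~one splitting $3$-$(10,3\times 2,1)$ design, and (b)~a splitting $(3,3\times 2)$-CS$(8^n:2)$ for every $n\ge 2$. Given (b), Proposition~\ref{fillin3} applies with $s=2$: filling each of the $n$ groups of size $8$ together with the stem by the splitting $3$-$(10,3\times 2,1)$ design from (a) --- possible since $8+2=10$ --- yields a splitting $3$-$(2+8n,3\times 2,1)$ design. As $n$ runs through $2,3,4,\dots$ this produces every $v\equiv 2\pmod 8$ with $v\ge 18$, while $v=10$ is the base design itself; this covers all admissible orders.

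\emph{Building the candelabra systems.} For $n=2$, part~(b) is exactly Lemma~\ref{small}. For $n\ge 3$ I would apply the Fundamental Construction, Theorem~\ref{FC3}, with target parameters $k'=3$, $c=2$ and inflation factor $m=8$, using as master the design on $n+1$ points whose blocks are all $3$-subsets, with one point designated as the stem --- a valid $(3,3)$-candelabra system of type $1^n$ with stem parameter $s=1$. Then the output stem is $m(s-1)+a=a$; choosing $a=2$ and supplying the stem-$2$ ingredient splitting $(3,3\times 2)$-CS$(8^{2}:2)$ of Lemma~\ref{small} for each block through $\infty$, while handling each remaining block by the splitting GDD$(3,3\times 2,24)$ of type $8^3$ that Corollary~\ref{multiply} produces from the trivial GDD$(3,3,12)$ of type $4^3$, yields precisely a splitting $(3,3\times 2)$-CS$(8^n:2)$. (If bounded block size is preferred, one could instead feed the group divisible designs of Theorem~\ref{GDDresults}(iii) as masters once the $3$-group systems CS$(8^3:a)$ are in hand.)

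\emph{Main obstacle.} I expect the single genuine difficulty to be the base case: the explicit splitting $3$-$(10,3\times 2,1)$ design underpinning the entire induction. It cannot be obtained by filling a smaller candelabra system, since any such fill returns $8+2=10$ and is circular, so it must be built by hand --- most plausibly as a short orbit of base blocks under a cyclic group, in the spirit of Example~\ref{151} and Lemma~\ref{small}. A minor secondary point is to confirm that the auxiliary ingredients (the stem-$2$ candelabra system and the trivial group divisible designs) are available for every $n$, which the construction above arranges uniformly.
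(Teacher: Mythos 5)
Your proposal follows essentially the same route as the paper: necessity via Proposition~\ref{necessary}, then construction of a splitting $(3,3\times 2)$-CS$(8^n:2)$ for every $n\geq 2$ by combining Lemma~\ref{small} with Theorem~\ref{FC3} applied to the trivial $(3,3)$-CS$(1^n:1)$ (using the splitting GDD$(3,3\times 2,24)$ of type $8^3$ obtained from Corollary~\ref{multiply}), and finally filling in the groups with a splitting $3$-$(10,3\times 2,1)$ design via Proposition~\ref{fillin3}. The one ingredient you leave open --- the $10$-point base design, which you flag as the ``main obstacle'' requiring a hand construction --- is not constructed in the paper either: its existence is simply quoted from Huber \cite{Huber:2010}, so your argument is complete once that citation is supplied.
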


\begin{proof}
Necessity of the condition $v\equiv 2\bmod{8}$ follows from Proposition \ref{necessary}.

Huber \cite{Huber:2010} has shown the existence of a splitting $3$-$(10,3\times 2,1)$ design,
so we consider $v>10$. Write $v=8m+2$, for some $m\geq 2$. Let $X$ be a set
of $m+1$ points, containing $\infty$ as a distinguished point. It is easy to verify that
$(X,\{\infty\},\{\{x\}: x\in X\setminus\{\infty\}\},\binom{X}{3})$ is a $(3,3)$-CS$(1^m:1)$.
Apply Theorem \ref{FC3} with a splitting $(3,3\times 2)$-CS$(8^2:2)$ (which exists by
Lemma \ref{small})
and a splitting GDD$(3,3\times 2,24)$ of type $8^3$ (whose existence is implied by
the trivial GDD$(3,3,12)$ of type $4^3$ and Corollary \ref{multiply}) to obtain a 
splitting $(3,3\times 2)$-CS$(8^m:2)$. Now apply Proposition \ref{fillin3} to this
splitting $(3,3\times 2)$-CS$(8^m:2)$ with a 
splitting $3$-$(10,3\times 2,1)$ design to obtain a splitting $3$-$(8m+2,3\times 2,1)$ design.
\end{proof}

\section{Conclusion}

Determining the existence of optimal $c$-splitting authentication codes
with $k$ source states
that are $(t-1)$-fold secure against spoofing is a difficult problem, when $k$, $c$ 
and $t$ are large.
New constructions, both direct and recursive, need to be developed in order to
make further progress on the problem.

\section*{Acknowledgment}

The authors would like to thank Gennian Ge and Alan Ling for helpful discussions.


\providecommand{\bysame}{\leavevmode\hbox to3em{\hrulefill}\thinspace}
\providecommand{\MR}{\relax\ifhmode\unskip\space\fi MR }
\providecommand{\MRhref}[2]{%
  \href{http://www.ams.org/mathscinet-getitem?mr=#1}{#2}
}
\providecommand{\href}[2]{#2}

\medskip
Received xxxx 20xx; revised xxxx 20xx.
\medskip

\end{document}